\documentclass[journal,twocolumn,10pt]{IEEEtran}
\hyphenation{net-works mo-ment gen-eral other users pre-dominately access dB com-parisons MGF Theorem}

\usepackage{color}
\usepackage{amsfonts}
\usepackage{amsmath}
\usepackage{amsthm}
\usepackage{amssymb}
\usepackage{alltt}
\usepackage{epsfig}
\usepackage{mathrsfs}
\usepackage{graphicx,ifthen}
\usepackage{epstopdf}
\usepackage{subfigure}
\usepackage{esint}
\usepackage{bm}
\usepackage{bbm}
\usepackage{enumerate}
\usepackage{cite}
\usepackage{float}
\usepackage{cuted}
\usepackage{algorithm}
\usepackage{algpseudocode}
\usepackage{booktabs}
\usepackage{array}
\usepackage{ulem}
\usepackage{soul}
\usepackage{verbatim}
\usepackage{hhline}
\usepackage{multirow}
\usepackage{arydshln}
 \usepackage{makecell}
\usepackage{stfloats}
\usepackage{siunitx}

\newtheorem{proposition}{Proposition}

\newcommand{\mb}[1]{\mathbf{#1}}
\newcommand{\mc}[1]{\mathcal{#1}}
\newcommand{\mbb}[1]{\mathbb{#1}}

\hyphenation{op-tical net-works semi-conduc-tor}

\linespread{1}
\begin{document}


\date{} 

{ {	\title{ \LARGE{Analysis and Optimization of Multiple-STAR-RIS Assisted MIMO-NOMA with GSVD Precoding: An Operator-Valued Free Probability Approach}}} }

\author{ {Siqiang~Wang,~\IEEEmembership{Graduate Student Member,~IEEE}, Zhong~Zheng,~\IEEEmembership{Member,~IEEE,} 	Jing~Guo,~\IEEEmembership{Senior Member,~IEEE},	Zesong~Fei,~\IEEEmembership{Senior Member,~IEEE} and Zhi Sun,~\IEEEmembership{Senior Member,~IEEE}
	}
	
	\thanks{S. Wang, Z. Zheng, J. Guo, and Z. Fei are with the School of Information and Electronics, Beijing Institute of Technology, Beijing,
		China (E-mails: \{3120205406, zhong.zheng, jingguo, feizesong \}@bit.edu.cn). 
		
			Z. Sun is with the Department of Electronic Engineering, Tsinghua University, Beijing, China (E-mail: zhisun@tsinghua.edu.cn).
	  }
	
}

\maketitle
\begin{abstract}
Among the key enabling 6G techniques, 
multiple-input multiple-output  
(MIMO) and non-orthogonal multiple-access (NOMA) play an important role in 
 enhancing the spectral efficiency of the wireless communication systems. To further extend the coverage and the capacity, 
  the simultaneously transmitting and reflecting reconfigurable intelligent surface (STAR-RIS) has recently emerged out as a cost-effective technology. 
To exploit the benefit of STAR-RIS in the MIMO-NOMA systems, in this paper, we investigate the analysis and optimization of the downlink dual-user MIMO-NOMA systems assisted by multiple STAR-RISs under the generalized singular value decomposition (GSVD) precoding scheme, in which the channel is assumed to be Rician faded with the Weichselberger’s correlation structure. To analyze the asymptotic information rate of the users,  we apply the operator-valued free probability theory to obtain the Cauchy transform of the generalized singular values (GSVs) of the MIMO-NOMA channel matrices, which can be used to obtain the information rate by Riemann integral. Then, considering the special
case when the channels between the BS and the STAR-RISs are deterministic,  
 we obtain the closed-form expression for the asymptotic information rates of the users.  Furthermore, a projected gradient ascent method (PGAM) is proposed with the derived closed-form expression to design the STAR-RISs thereby maximizing the sum rate based on the statistical channel state information.
 The numerical results show the accuracy of the asymptotic expression compared to the Monte Carlo simulations and the superiority
 of the proposed PGAM algorithm.


\end{abstract}

\begin{IEEEkeywords}
STAR-RIS, MIMO-NOMA, GSVD, Rician channel, operator-valued free probability, PGAM.
\end{IEEEkeywords}

\IEEEpeerreviewmaketitle


\section{Introduction}\label{intro}
Future communication systems are envisioned to support emerging services such as virtual reality, autonomous driving, and augmented reality, which promotes the need to further enhance the spectral efficiency and capacity of the wireless communication systems.
As a key physical layer technology in 5G,
multiple-input multiple-output (MIMO) plays an important role in
improving the spectral efficiency and reliability of the wireless link by utilizing the spatial degree of freedom (DoF) of the {transfer} channels  \cite{SaadNet2020,Liu2023TVT}. However, the capacity of the MIMO system with the orthogonal access scheme is limited when there are many users/streams spatially indistinguishable \cite{DingCM2017}. Therefore, to achieve the capacity gain promised by MIMO with large number of users/streams, the non-orthogonal multiple access (NOMA) schemes have emerged to boost the system capacity \cite{Dai2018CS,Wang2023IOT}, in which the data are transferred in the same time-frequency resource, and then the data are recovered at the receiver by the successive interference cancellation (SIC) mechanism \cite{FangTcom2023,LiuXTVT20242023}.

There are many works on the MIMO-NOMA schemes, among which the generalized singular value decomposition (GSVD) precoding stands out due to the better communication  performance  and low implementation  complexity, especially under  the high signal-to-noise ratio (SNR) \cite{RaoIot2024}. Specifically, 
the GSVD precoding transforms the multi-user MIMO (MU-MIMO) channel into multiple parallel multi-user single-input single-output (MU-SISO) subchannels, thereby  reducing the complexity of signal processing in each subchannel \cite{LoanJNA,MaVTC2016}. {\color{black} Therefore, the performance of the GSVD-precoded MIMO-NOMA systems is gaining increasing attention.}
In \cite{chenTWC2019}, the GSVD precoding scheme was adopted in the downlink MIMO-NOMA system to improve the sum rate and the asymptotic average rates for individual users were derived under Rayleigh channels, assuming that the number of the receiving antennas at the users are the same.
Furthermore, considering the line-of-sight (LoS) components of the channels,
the asymptotic performance of the GSVD-precoded MIMO-NOMA system with Rician fading was investigated in \cite{RAO2023}. With the free probability theory, the authors first derived the Cauchy transform of the generalized singular values (GSVs) of the MIMO-NOMA channels. Then, by simplifying the channels to
be Rayleigh-faded, the closed-form expressions of the average rate were derived. In addition, the authors in  \cite{RaoIot2024} extended the scenario to a GSVD-based MIMO-NOMA system assisted by an amplify-and-forward (AF) relay, {\color{black}in which the closed-form expression of the probability density
function (PDF) of GSVs for two MIMO channel matrices was derived for the insightful performance evaluation.}

Recently, reconfigurable intelligent surface (RIS) is regarded
as a promising technology in 6G to enhance the cellular communication network
since it can  improve the effective channel between the transmitter and the receiver  by controlling the phase and amplitude of the RIS elements \cite{Wu2020CM}. However, conventional RIS can only serve users on one side of the RIS panel, otherwise the signal will be blocked. To address this issue, simultaneously
transmitting and reflecting RIS (STAR-RIS) is proposed to further exploit the merit of RIS to extend the coverage and improve the channel quality \cite{ZuoTWC2023}. 
Specifically, STAR-RIS extends
the service coverage  to the full-space by transmitting and reflecting
the impinging signals simultaneously
 \cite{PapazafeiropoulosTVT2023,XuCL2021}. The distinctive feature of the STAR-RIS enables seamless integration with the NOMA technique,
 i.e., the users located  on both sides of the STAR-RIS can be simultaneously served by the BS with the NOMA scheme. 
 {\color{black} By allocating distinct powers to different users at the BS and configuring different transmission and reflection coefficients at the STAR-RIS,} the signals for the two users can be effectively distinguished.
{\color{black}Therefore, the performance analysis of STAR-RIS assisted NOMA systems has gradually attracted attention \cite{WangCL2022,BasharatTWC2023,Zhaocl2022,ChenWCL2022}.} Specifically, when the direct links between the BS and the users are blocked by the obstacles, the authors in \cite{Zhaocl2022} derived the closed-form expressions of ergodic rates and high signal-to-noise ratio (SNR) slopes for the downlink transmissions towards single-antenna users, where a single STAR-RIS was deployed. Furthermore, considering spatially correlated channels towards two users, {\color{black}the authors in \cite{WangCL2022} derived the closed-form  expressions of the outage probability for two downlink NOMA users with a STAR-RIS assisted}, which illustrated the performance loss due to the channel correlations for a single-input single-output (SISO) system. In addition, considering the existence of the direct links between the BS and the users,  the statistical distribution of the \mbox{Nakagami-m} channels and the ergodic capacity of the receiving node were obtained for 
 a STAR-RIS assisted SISO-NOMA downlink system in \cite{BasharatTWC2023}.  With multiple antennas equipped by BS in \cite{ChenWCL2022}, the authors  analyzed  the approximate
 analytical expressions of the ergodic rate of a STAR-RIS aided downlink NOMA system, which were maximized to design the phase-shift matrix of the STAR-RIS with the statistical channel state information (CSI).

{\color{black}{However, to the best of our knowledge, the area of STAR-RIS assisted MIMO-NOMA system with GSVD precoding remains unexplored and needs to be analyzed to obtain insights  about the impacts of the GSVD-precoding scheme on the STAR-RIS assisted system
		performance.}} In addition, the scenarios considered in the existing works on GSVD-NOMA either have the simplified Rayleigh channel assumptions \cite{chenTWC2019,RaoIot2024}, or derive results that are not general and analytical \cite{RAO2023}. Therefore, due to the more complicated channel configuration of STAR-RIS-assisted communications with general Rician fading, the existing
 performance analysis approaches  cannot be applied here.

{\color{black}To address these issues, we focus on the analysis and optimization of the
	asymptotic information rate of the GSVD-precoded MIMO-NOMA systems assisted by multiple STAR-RIS, which can efficiently improve the coverage and the rank deficiency of the MIMO channels \cite{ZhaoWCL2021,MaTWC2022,PalaWCL2023}. In addition, a general
	Rician channel model with Weichselberger’s correlation
	structure is adopted}. {\color{black} By simplifying the system configuration, the considered system  can degrade to other MIMO-NOMA systems \cite{chenTWC2019,RAO2023}. In addition, compared to \cite{WangTWC2024}, we further develop a linearization method for the rational matrix polynomial and address the inversion of 7x7 block matrix with non-diagonal  R-transform structure.  The correctness of the derived closed-form expression for the information rate is verified  by the derivatives and the information rate relationships under special SNR, which is not concluded in  \cite{WangTWC2024}.}

The main contributions of this paper are summarized as follows:
\begin{itemize}
	\item We consider a multi-STAR-RIS assisted MIMO-NOMA system, where the GSVD-precoding scheme is applied to serve two users simultaneously. In addition, a general Rician
	MIMO channel model with Weichselberger's correlation structure is adopted to fit a wider range of MIMO channels. By applying the operator-valued free probability theory, we obtain the Cauchy transform of the GSVs of the MIMO-NOMA channel matrices under different antenna configurations with the help of the linearization trick for the matrix-valued rational functions.
	\item To obtain a deep insight into the considered system,  we consider a special
	case by setting the channels between the BS and the STAR-RISs deterministic. Then the closed-form expressions for the asymptotic information rates of the two NOMA users are obtained via the Cauchy transform of the MIMO-NOMA channel matrices. {\color{black} In addition, the correctness of the derived closed-form expression for the information rate is proved by verifying derivatives and the rate relationships under special scenarios, i.e.,  the information rate should be 0 as the noise tends to be infinite}
	Furthermore, based on the closed-form expressions, we propose a projected gradient ascent method (PGAM) to jointly optimize the phase shifts, the transmission and reflection coefficients of the STAR-RIS to maximize the sum rate of the two users with statistical CSI. Specifically, with the derived gradients of the sum rate, we iteratively update variables and  project the updated variables onto the appropriate space to satisfy the constant-modulus constraints of STAR-RIS elements, as well as the constraints of the transmission and reflection coefficients.
	\item We conduct comprehensive simulations to verify the accuracy of the derived Cauchy transform and the closed-form expressions for the
	power normalization factor and the  asymptotic information rates, as well as the effectiveness of the proposed PGAM optimization algorithm. It is shown that the derived expressions match the Monte-Carlo simulations well and the STAR-RIS
	can provide significant performance gains  by using the proposed PAGM algorithm.
\end{itemize}

The rest of the paper is organized as follows. The system model is introduced in Section II. In Section III, we derive the Cauchy transform for the GSVs of the STAR-RIS assisted MIMO-NOMA channel and the explicit expression for the GSVD-precoding scheme. The closed-form expressions for the asymptotic information rates and the proposed PGAM algorithm  are demonstrated in Section IV. The simulation results are illustrated in Section V. Finally, we conclude the main findings of the paper in  Section VI.

\emph{Notations}: In this paper, we denote scalars, vectors and matrices as the non-bold letters, lowercase bold letters and uppercase bold letters, respectively. $\bf{0}$ represent all-zero matrix and ${\bf{I}}_N$ represent the $N \times N$ identity matrix. The superscripts $(\cdot)^*$, $(\cdot)^\text{T}$ and $(\cdot)^\dag$ are denoted as the conjugate, transpose and conjugate transpose operations, respectively. The notation $ [\mb{A}]_{i,j} $ is defined as the element in the $i$-th row and $j$-th column of matrix $ \mb{A} $.
The notation  $\text{diag}(\mb{A}_1,\mb{A}_2\cdots,\mb{A}_n)$ denotes the diagonal block matrix consisting of $ \mb{A}_1,\mb{A}_2\cdots,\mb{A}_n $ matrices. The operator $\text{Tr}(\cdot)$ denotes the trace of the matrix, $\mbb{E}[\cdot]$ denotes the expectation, and $\det(\cdot)$ denotes the matrix determinant.

\section{System Model}\label{secModel}

\subsection{Signal Model}
In this paper, we consider a multi-STAR-RIS assisted downlink MIMO-NOMA communication system as shown in Fig.~\ref{simlpifySystem}, where the BS simultaneously transmits individual messages to two users located on different sides of the STAR-RIS panels \footnote{{\color{black}For the scenarios with more than two users, the users can be first grouped in pairs \cite{RAO2023}, in which the users in different groups can be differentiated by occupying orthogonal time-frequency resource blocks (such as orthogonal frequency division multiple access scheme). After grouping the users, the proposed MIMO-NOMA with GSVD precoding scheme for two users can be directly applied.}}. The BS is assumed to be equipped  with $T$ transmit antennas and the $i$-th user is assumed to be equipped with $R_i$ receiving antennas for $i=1,2$.
Let $\mb{s}_i \in \mbb{C}^{T \times 1}$ denote the message sent to the $i$-th user. In the MIMO-NOMA transmission scheme, the transmit signal can be expressed as 
\begin{align}
	\mb{x}=\mb{W}\mb{s}=\mb{W}\left( \sqrt{\kappa_1}\mb{s}_1 +\sqrt{\kappa_2}\mb{s}_2\right),
\end{align}
where $ \mb{W} $ is the precoding matrix at the BS, and $ \kappa_1 $ and $ \kappa_2 $ are the fractional power allocated to the two users, satisfying $ 0\le \kappa_i\le 1 $ and ${\kappa_1}+{\kappa_2}=1$.
In addition, the transmit signals are independent Gaussian such that $ \mb{s}_i\sim CN(\mb{0},\mb{I}) $ and $ \mbb E[\mb{s}_1\mb{s}_{2}^\dag] = \mb{0} $.
Assuming that the noise at the {$i$-th} user $ \mb{n}_i $ is the additive white Gaussian noise (AWGN)  with zero mean and covariance  $\sigma_0^2{\bf{I}}_{R_i}$,
the received signal at the $n$-th user can be expressed as 
\begin{align}
	\mb{y}_i=\sqrt{\frac{{\rho_i}P}{t}}\mb{A}_i\left(\mb{R}_{0,i}+ \sum_{k=1}^{K}\mb{R}_{k,i}\mb{\Theta}_{k,i}\mb{F}_k\right)\mb{W}\mb{s} + \mb{n}_i, \label{receive}
\end{align}
where $ t={\text{Tr}(\mb{W}\mb{W}^\dagger)} $ is the power normalization factor for the precoder,  $P$ is the transmit power of the BS and $ \rho_i $ is the relative channel gain for the $i$-th user.
The matrices  $ \mb{A}_i \in \mbb{C}^{R_i \times R_i}$, $ 1\le i \le 2 $, are the receive filter at the $i$-th user. The diagonal matrices $ \mb{\Theta}_{k,1} =\text{diag}(\sqrt{\beta_{k,1,1}}e^{j\theta_{k,1,1}},...,\sqrt{\beta_{k,1,L_k}}e^{j\theta_{k,1,L_k}})$ and $ \mb{\Theta}_{k,2}=\text{diag}(\sqrt{\beta_{k,2,1}}e^{j\theta_{k,2,1}},...,\sqrt{\beta_{k,2,L_k}}e^{j\theta_{k,2,L_k}})$ denote the transmitting and reflecting phase-shifting matrices of the STAR-RISs, respectively.
The parameters $ 0\le \sqrt{\beta_{k,i,l}} \le 1 $ and $ \theta_{k,i,l} \in [0,2\pi)$ denote the fractional power and the phase shifts  of the transmitting side ($ i=1 $) or the reflecting side ($ i=2 $), where $ \beta_{k,1,l} + \beta_{k,2,l}= 1 $ for $1\leq k\leq K$ and $1\leq l\leq L_k$.
The matrices $\mb{R}_{0,i} \in \mbb{C}^{R_i \times T}$  denote the channels from the BS to the $ i $-th user and the matrices
$\mb{F}_{k} \in \mbb{C}^{L_k \times T}$  denote the channels from the BS to the $k$-th STAR-RIS and $ \mb{R}_{k,i} \in \mbb{C}^{R_i \times L_k} $ denote the channel from the $k$-th STAR-RIS to the $i$-th user, where $ L_k $ is the number of transmitting and reflecting elements in $k$-th STAR-RIS panel. 

\subsection{Channel Model}
\begin{figure}[t]
	\centerline{\includegraphics[width=1\columnwidth]{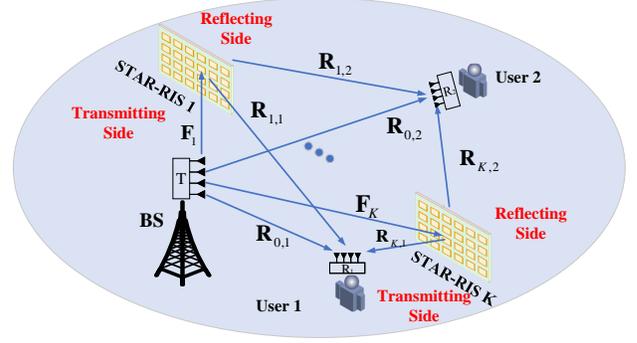}
	}
	\caption{STAR-RIS assisted MIMO-NOMA communication system.}
	\label{simlpifySystem}
\end{figure}

In this paper, to capture a general spatial correlation structure of the channels, 
we assume all the channels  $ \mb{F}_{k} $ and $ \mb{R}_{k,i}  $ are Rician faded with the Weichselberger's correlation structure \cite{WeichselbergerTWC06}, which can be modeled as
\begin{align}
		\mb{F}_{k} & = \overline{\mb{F}}_{k} + \widetilde{\mb{F}}_{k} = \overline{\mb{F}}_{k} + \mb{U}_{k} (\mb{M}_{k} \odot \widetilde{\mb{X}}_{k}) \mb{V}_{k}^{\dagger},\label{eqFk}\\
	\mb{R}_{k,i} &= \overline{\mb{R}}_{k,i} + \widetilde{\mb{R}}_{k,i} = \overline{\mb{R}}_{k,i} + \mb{T}_{k,i} (\mb{N}_{k,i}\odot\widetilde{\mb{Y}}_{k,i}) \mb{S}_{k,i}^{\dagger},\label{eqGk-2}
\end{align} 
where $\mb{F}_{k}$ for $ 1\le k\le K $ and $ \mb{R}_{k,i} $ for $ 1\le i\le 2, 0\le k\le K$. The matrices $ \overline{\mb{F}}_{k}  $ and $ \overline{\mb{R}}_{k,i} $ are the deterministic LoS components of the channels. The deterministic matrices $  \mb{U}_{k}  $, $ \mb{V}_{k} $, $ \mb{T}_{k,i} $, and  $ \mb{S}_{k,i} $ are unitary, which denote the correlation between the channel coefficients. The elements in the matrices $\mb{M}_{k}$ and $ \mb{N}_{k,i} $ are nonnegative that control the variance of $ \widetilde{\mb{F}}_{k}  $ and $ \widetilde{\mb{R}}_{k,i} $, respectively. The matrices $ \widetilde{\mb{X}}_{k} $ and $ \widetilde{\mb{Y}}_{k,i} $ are complex Gaussian random matrices with  independent identically distributed variables with zero mean and equal variance, i.e., $[\widetilde{\mb{X}}_{k}]_{l,j} \sim\mc{CN}(0,1/T)$,  $[\widetilde{\mb{Y}}_{0,i}]_{l,j} \sim\mc{CN}(0,1/R_i)$ and $[\widetilde{\mb{Y}}_{k,i}]_{l,j} \sim\mc{CN}(0,1/L_k)$ for $1\leq k\leq K$.

In addition, the channels $\mb{F}_{k}$ and $ \mb{R}_{k,i} $  corresponding to different links are assumed to be independent. To facilitate subsequent derivations, we define the parameterized one-sided correlation matrices of $\widetilde{\mb{R}}_{k,i}$ as 
\begin{align}
	\eta_{k,i}(\widetilde{\mb{C}}) &= \mbb{E}[\widetilde{\mb{R}}_{k,i}^\dagger\widetilde{\mb{C}} \widetilde{\mb{R}}_{k,i}] = \frac{1}{\hat{L}_{k,i}} {\mb{S}}_{k,i}\mb{\Pi}_{k,i}(\widetilde{\mb{C}}){\mb{S}}_{k,i}^\dagger,   \label{yta}
	\\
	\widetilde{\eta}_{k,i}({\mb{C}}) &= \mbb{E}[\widetilde{\mb{R}}_{k,i}{\mb{C}} \widetilde{\mb{R}}_{k,i}^\dagger] = \frac{1}{\hat{L}_{k,i}}\mb{T}_{k,i} \widetilde{\mb{\Pi}}_{k,i}({\mb{C}}) \mb{T}_{k,i}^\dagger, \label{yta_qta}
\end{align}
where $ 0 \leq k\leq K $, $ 1\le i\le 2 $ and $ \hat{L}_{k,i}= R_i$ for $k=0$ and  $ \hat{L}_{k,i}= L_k$ for $k>0$.
The matrices $ \widetilde{\mb{C}} \in \mbb{C}^{R_i\times R_i}$ and $ {\mb{C}} \in \mbb{C}^{\hat{L}_{k,i} \times \hat{L}_{k,i}}$ are arbitrary Hermitian matrices. The entries of the diagonal matrices $\mb{\Pi}_{k,i}(\widetilde{\mb{C}})$ and $\widetilde{\mb{\Pi}}_{k,i}({\mb{C}})$ are  
\begin{align}
	\left[\mb{\Pi}_{k,i}(\widetilde{\mb{C}})\right]_{l,l} \!&= \sum_{j = 1}^{R_i} \left([\mb{N}_{k,i}]_{j,l}\right)^2 \left[\mb{T}_{k,i}^\dagger \widetilde{\mb{C}} \mb{T}_{k,i}\right]_{j,j},  1\le l\le \hat{L}_{k,i},
	\\
	\left[\widetilde{\mb{\Pi}}_{k,i}({\mb{C}})\right]_{l,l} &= \sum_{j=1}^{\hat{L}_{k,i}} \left([\mb{N}_{k,i}]_{l,j}\right)^2 \left[ {\mb{S}}_{k,i}^\dagger {\mb{C}} {\mb{S}}_{k,i} \right]_{j,j},   1\le l\le R_i.
\end{align}
Similarly, the parameterized one-sided correlation matrices of $ \widetilde{\mb{F}}_{k} $ are defined as
\begin{align}
		\zeta_{k}(\mb{D}_{1}) &= \mbb{E}[\widetilde{\mb{F}}_{k}^\dagger\mb{D}\widetilde{\mb{F}}_{k,i}] = \frac{1}{T} \mb{V}_{k}\mb{\Sigma}_{k}(\mb{D})\mb{V}_{k}^\dagger, \label{zeta-k-n}
	\\
	\widetilde{\zeta}_{k}(\widetilde{\mb{D}}) &= \mbb{E}[\widetilde{\mb{F}}_{k}\widetilde{\mb{D}}\widetilde{\mb{F}}_{k}^\dagger] = \frac{1}{T}\mb{U}_{k} \widetilde{\mb{\Sigma}}_{k}(\widetilde{\mb{D}}) \mb{U}_{k}^\dagger, \label{zeta_qta_k-n}
\end{align}
where $1 \leq k \leq K$ and the matrices  $ {\mb{D}} \in \mbb{C}^{L_k\times L_k}$ and $ \widetilde{\mb{D}} \in \mbb{C}^{T \times T}$ are arbitrary Hermitian matrices. The diagonal matrices $\mb{\Sigma}_{k}$ and $\widetilde{\mb{\Sigma}}_{k}$ contain the diagonal entries as
\begin{align}
	\left[\mb{\Sigma}_{k}(\mb{D})\right]_{l,l} &= \sum_{j = 1}^{L_k} \left([\mb{M}_{k}]_{j,l}\right)^2 \left[\mb{U}_{k}^\dagger \mb{D} \mb{U}_{k}\right]_{j,j},\quad 1\le l\le T,\\
	\left[\widetilde{\mb{\Sigma}}_{k}(\widetilde{\mb{D}})\right]_{l,l} &= \sum_{j=1}^{T} \left([\mb{M}_{k}]_{l,j}\right)^2 \left[ \mb{V}_{k}^\dagger \widetilde{\mb{D}} \mb{V}_{k} \right]_{j,j}, \quad 1\le l\le L_k.\label{zeta-k-n-center}
\end{align}

\subsection{Information Rate for GSVD Precoding}
In this paper, the  GSVD precoding and decoding scheme is applied to  eliminate the  interference between  the MIMO-NOMA subchannels. Defining $\mb{H}_i=\left(\mb{R}_{0,i}+ \sum_{k=1}^{K}\mb{R}_{k,i}\mb{\Theta}_{k,i}\mb{F}_k\right)$, we can obtain the GSVD precoding matrix  $\mb{W}$ and the decoding matrices $\mb{A}_i$ for two users  as $ \mb{W}=\mb{V}^{-1} $ and 
$ \mb{A}_i=\hat{\mb{U}}_i^\dagger $, where $ \mb{V} $ and $ \hat{\mb{U}}_i $ are obtained by applying GSVD to the channel matrices $ \mb{H}_1 $ and $ \mb{H}_2 $, which can be expressed as 
\begin{align}
	\mb{H}_i=\hat{\mb{U}}_i\mb{\Sigma}_i\mb{V},
\end{align}
where $ \hat{\mb{U}}_i $  is a unitary matrix and $ \mb{V} $ is  a nonsingular matrix and $ \mb{\Sigma}_i $ is a rectangular diagonal matrix. With the GSVD precoding matrix  $\mb{W}$ and the decoding matrices $\mb{A}_i$, the received signal in \eqref{receive} can be repressed as 
{\color{black}\begin{align}
	\mb{y}_i=\sqrt{\frac{\rho_iP}{t}}\mb{A}_i\mb{H}_i\mb{W}\mb{s} + \mb{n}_i=\sqrt{\frac{\rho_iP}{t}}\mb{\Sigma}_i\mb{s} + \mb{n}_i.
\end{align}}

Therefore, the expressions for information rate are related to the matrix $\mb{\Sigma}_i$, which is dependent on the numbers of transmitting antennas and receiving antennas.  {\color{black}With the statistical CSI obtained, we adopt a statistical CSI-based SIC scheme to eliminate the interference between two users \cite{RAO2023,WangWCL2018}, in which the user with better average channel conditions conducts SIC, i.e., we assume $ \mbb{E}\left[\text{Tr }\left(\rho_1 \mb{H}_1\mb{H}_1^\dagger\right)\right]>\mbb{E}\left[\text{Tr }\left(\rho_2 \mb{H}_2\mb{H}_2^\dagger\right)\right] $ and the SIC is performed at user 1 while user 2 treats the signal of user 1 as noise.} Let $ \hat{m}=\min(R_1,R_2)$,
according to the definition of GSVD, the information rates for two users are summarized as follows:
\begin{itemize}
\item When $T \leq \hat{m}$: The rectangular diagonal matrix $\mb{\Sigma}_i \in \mbb{C}^{R_i \times T}$ can be expressed as $\mb{\Sigma}_i=\left[\begin{array}{*{20}{c}}
	{\hat{\mb{\Sigma}}_i}\\
	{\mb{0}_{\hat{R}_i\times T}}
\end{array}\right]$, where $\hat{R}_i=R_i-S$, $\hat{\mb{\Sigma}}_i=\text{diag}({\sigma_{i,1},...,\sigma_{i,S}})$ and $S=\min(R_1,T)+\min(R_2,T)-\min(R_1+R_2,T) $.  Then the information rate for the users can be expressed as \begin{align}
	I_{1}=&\sum_{k=1}^{S}\log \left( 1+  \frac{\mu_k}{1+\mu_k}\frac{P\kappa_1\rho_1}{t\sigma_0^2} \right), \label{I1_ori}\\
	I_{2}=&\sum_{k=1}^{S}\log \left( 1+  \frac{\kappa_2P}{\kappa_1+(1+\mu_k)t\sigma_0^2} \right), \label{I2_ori}
\end{align}
where $\mu_k=\frac{\sigma_{1,k}^2}{\sigma_{2,k}^2}$.

\item  When  $\hat{m} \leq T < R_1+ R_2$: The rectangular diagonal matrix $\mb{\Sigma}_i $ can be expressed as
\begin{align}
\mb{\Sigma}_1=	\left[ {\begin{array}{*{20}{c}}
			{\mb{0}_{S\times \hat{T}_1 }}&{\hat{\mb{\Sigma}}_1}&{\mb{0}_{S\times\hat{R}_1}}\\
			{\mb{0}_{\hat{R}_1\times \hat{T}_1 }}&{\mb{0}_{\hat{R}_1\times S }}&{ \mb{I}_{\hat{R}_1}}
	\end{array}} \right],\\
\mb{\Sigma}_2=\left[ {\begin{array}{*{20}{c}}
	{\mb{I}_{\hat{R}_2 }}&{\mb{0}_{\hat{R}_2\times S }}&{\mb{0}_{\hat{R}_2 \times\hat{T}_2}}\\
	{\mb{0}_{S\times\hat{R}_2 }}&	{\hat{\mb{\Sigma}}_2}&{ \mb{0}_{S\times \hat{T}_2}}
	\end{array}} \right].
\end{align}
Then the information rates for the users can be expressed as \begin{align}
	\hat{I}_{1}=&I_{1}+ (T-R_1)\log\left(1+ \frac{P\rho_1}{t\sigma_0^2}\right), \\
	\hat{I}_{2}=&I_{2}+(T-R_2)\log\left(1+ \frac{P\rho_2}{t\sigma_0^2}\right), 
\end{align}
where $I_1$ and $I_2$ are given in \eqref{I1_ori} and \eqref{I2_ori}.

\item When $T \geq   R_1+ R_2$: $\mb{\Sigma}_1=\left[\mb{I}_{R_1} \mb{0}_{R_1 \times (T-R_1)}\right] $, $\mb{\Sigma}_2=\left[\mb{0}_{R_2 \times (T-R_2)} \mb{I}_{R_2}\right] $ and 
the information rates for the users can be expressed as \begin{align}
	\hat{I}_{1}=&(T-R_1)\log\left(1+ \frac{P\rho_1}{t\sigma_0^2}\right), \\
	\hat{I}_{2}=&(T-R_2)\log\left(1+ \frac{P\rho_2}{t\sigma_0^2}\right). 
\end{align}
\end{itemize}

Based on the above analysis, we can observe that only $I_1$ and $I_2$ in \eqref{I1_ori} and \eqref{I2_ori} are related to the channel. Therefore, we only focus on the analysis for $I_1$ and $I_2$. However,  it is difficult to directly obtain the expression of $ I_i $  since the distribution of $\mu_k$ is unknown. Therefore, we rewrite $I_1$ and $I_2$ equivalently in term of the Cauchy transform of $ \mc{G}_\mu $ as \cite{chenTWC2019}
\begin{align}
	I_1=&S\int_{0}^{\frac{\rho_1}{t\sigma_0^2}} { \left( \frac{1}{1+x}+  \frac{1}{(1+x)^2}\mc{G}_\mu(-(x+1)^{-1}) \right)\text{d}x}, \label{R1_int}\\
	I_2=&S\int_{0}^{\frac{\rho_2}{t\sigma_0^2}} {\left( -\mc{G}_\mu(-(x+1))+\kappa_1\mc{G}_\mu(-(\kappa_1x+1)) \right)\text{d}x},\label{R2_int}
\end{align}
where $ \mc{G}_\mu(z)= \int_{0}^{\infty} \frac{1}{z-x}\mathrm{d}F_{\mu}(x)$ and $F_{\mu}$ denotes the empirical cumulative distribution function of $\mu$.
According to \cite{chenTWC2019}, when $ R_1>R_2 $, the variable $\mu$ are the non-zero eigenvalues of the matrix $ \mb{B} $ as
\begin{align}
	\mb{B}=
	\left\{ {\begin{array}{*{20}{c}}
			{\mb{H}_1(\mb{H}_2^\dagger\mb{H}_2)^{-1}\mb{H}_1^\dagger,}&{T\leq R_2 }\\
			{{\mb{H}}_1(\hat{\mb{H}}_2^\dagger\hat{\mb{H}}_2)^{-1}{\mb{H}_1^\dagger},}&{R_2 < T < R_1+R_2}
	\end{array}} \right. \label{B}
\end{align}
where $ \hat{\mb{H}}_2=\left[ {\begin{array}{*{20}{c}}
		{\mb{H}_2}\\
		{\Delta \hat{\mb{I}} } 
\end{array}} \right] $ and $\hat{\mb{I}}=[\mb{I}_{\hat{R}_2} \mb{0}_{\hat{R}_2\times T } ]$, in which $\Delta \to 0 $ \cite{RAO2023}.  When $ R_2 > R_1 $, $ \mu $ corresponds to the eigenvalues of $ \mb{B} $ by switching $ \mb{H}_1 $ and $ \mb{H}_2 $ in \eqref{B}.
Then the Cauchy transform $ \mc{G}_\mu(z) $ for $ R_2 \leq R_1 $ can be expressed as 
\begin{align}
	\mc{G}_\mu(z)=\left\{ {\begin{array}{*{20}{c}}
			{\frac{R_1}{S}\mc{G}_\mb{B}(z)-\frac{R_1-S}{Sz},}&{T\leq R_2 }\\
			{\frac{R_1}{S}\mc{G}_\mb{B}(z),}&{R_2 < T < R_1+R_2}
	\end{array}} \right. \label{Gu}
\end{align}
where $ \mc{G}_\mb{B}(z) $ denotes the  Cauchy transform of $ \mb{B} $. In the next section, we will focus on the derivation of $ \mc{G}_\mb{B}(z) $ and the expression of the normalization factor $t$, which  will be  used to derive the closed-form expression of the information rates of users in \eqref{R1_int} and \eqref{R2_int}.

\section{Cauchy Transform and the Power Normalization Factor of the STAR-RIS Assisted MIMO-NOMA with GSVD Precoding}
 \begin{figure*}[b!] 
	\hrulefill
	{
		\begin{align}
			\mbb{E}_\mc{D}\left[\mb{X}\right] = \left[\begin{array}{c:c:c:c:c:c:c}
				\mbb{E}\left[\{\mb{X}\}_1^{R_1}\right] & & & & & &\\\hdashline
				& \mbb{E}\left[\mb{X}_{\widetilde{\mb{D}}_1}\right] & & & &\mbb{E}\left[\mb{X}_{\widetilde{\mb{D}}_{3}}\right]&\\\hdashline
				& & \mbb{E}\left[\{\mb{X}\}_{2R_1+L+1}^{2R_1+L+T}\right] & & &&\\ \hdashline
				& & &  \mbb{E}[\mb{X}_{\widetilde{\mb{C}}_1}]& && \\ \hdashline 
				& &  & &\mbb{E}\left[\{\mb{X}\}_{2R_1+R_2+2L+T+1}^{2R_1+R_2+2L+T+R_2}\right] &&\\ \hdashline
				&  \mbb{E}\left[\mb{X}_{\widetilde{\mb{D}}_{4}}\right]& & & & \mbb{E}\left[\mb{X}_{\widetilde{\mb{D}}_2}\right] &\\ \hdashline
				& & & & &&\mbb{E}[\mb{X}_{\widetilde{\mb{C}}_2}]
			\end{array}\right],\label{eqEDX}
		\end{align}
	}
	
\end{figure*}

 {\color{black}Since the information rates $ I_1 $ and $ I_2 $ have an explicit relationship with the Cauchy transform as shown in (23)-(24), in this section we will first focus on the  derivations of the Cauchy transform of $ \mb{B} $.  Then the asymptotic expression for the  power normalization factor $t$ is obtained by the similar way. It should be noted that the derivation for the case $ T> R_2 $ is similar to that for $ T\leq R_2 $ by substituting $ \hat{\mb{H}}_2 $ into $ {\mb{H}}_2 $ in the derivations.}
 
 Free probability theory is a powerful tool to derive the spectral
 distributions of the non-commutative random variables, which are free on a non-commutative probability space \cite{Muller2002TIT}. Therefore, the Cauchy transform of the products of asymptotically free variables can be obtained by free multiplicative convolution in the conventional free probability theory.
 {\color{black} However, in the considered problem with $ \mb{B}=\mb{H}_1(\mb{H}_2^\dagger\mb{H}_2)^{-1}\mb{H}_1^\dagger $, both $ \mb{H}_1 $ and $ \mb{H}_2 $ are  composed of random non-central matrices ($ \mb{F}_k $ and $ \mb{R}_{k,i} $) with non-trivial spatial
 	correlations, and thus, are not free over the complex algebra  in the classic free probability aspect, which makes it difficult to obtain the Cauchy transform.  Therefore, in order to get the Cauchy transform, we have to find a space where $ \mb{H}_1 $ and $ \mb{H}_2 $ are asymptotically free. With this motivation,  we resort to the operator-valued free probability theory and the  linearization trick for the matrix-valued rational function $\mb{B}$, which constructs a corresponding block matrix by embedding the component matrices within $\mb{H}_1$ and $\mb{H}_2$. In this way, the constructed  operator-valued variables with the linearization trick  are shown to be asymptotically free in the operator-valued probability space and then the Cauchy transform of $ \mb{B} $ can be
 	obtained by its relation with the operator-valued Cauchy
 	transform for the linearized block matrix.}
 For notational
 simplicity, we define $\mb{G}_i=[\mb{I}_R, \mb{R}_{1,i}\mb{\Theta}_{1,i},...,\mb{R}_{K,i}\mb{\Theta}_{K,i}]$ and $\mb{F}_i=[\mb{R}_{0,i}^\dagger, \mb{F}_{1}^\dagger,...,\mb{F}_{K}^\dagger]^\dagger$ for $1\leq i\leq 2$. In addition, $\overline{\mb{G}}_i=[\mb{I}_R, \overline{\mb{R}}_{1,i}\mb{\Theta}_{1,i},...,\overline{\mb{G}}_{K,i}\mb{\Theta}_{K,i}]$ and $\overline{\mb{F}}_i=[\overline{\mb{F}}_{0,i}^\dagger, \overline{\mb{F}}_{1}^\dagger,...,\overline{\mb{F}}_{K}^\dagger]^\dagger$ for $1\leq i\leq 2$.
 
\subsection{Cauchy Transform $ \mc{G}_\mb{B}(z) $} 
{\color{black} Based on \eqref{R1_int}-\eqref{Gu}, we need to obtain Cauchy transform $ \mc{G}_\mb{B}(z) $ to get the information rate for two users.}
  According to \eqref{B}, the matrix $ \mb{B} $ takes the
 form of  a rational function of $ \mb{H}_1  $ and $ \mb{H}_2 $.
Applying the linearization trick for the matrix-valued rational functions \cite{HELTON}, we can construct the linearization matrix of $ \mb{B} $ as
 \begin{align}
 		\setlength{\arraycolsep}{1.8pt}
 	{\mb{L}}=\left[ {\begin{array}{*{20}{c}}
 			\mb{0}&	\mb{0}&	\mb{0}&	\mb{0}&	\mb{0}&	\mb{0}&	\mb{G}_1\\
 			\mb{0}&	\mb{0}&	\mb{F}_1&	\mb{0}&	\mb{0}&	\mb{0}&-\mb{I}_{R_1+L}\\
 			\mb{0}&	\mb{F}_1^\dagger&	\mb{0}&	\mb{0}& \mb{0}&-\mb{F}_2^\dagger&\mb{0}\\
 			\mb{0}&	\mb{0}&	\mb{0}&	\mb{0}&-\mb{G}_2^\dagger&\mb{I}_{R_2+L}&	\mb{0}\\
 			\mb{0}&	\mb{0}&	\mb{0}&-\mb{G}_2&\mb{I}_{R_2}&	\mb{0}&	\mb{0}\\
 				\mb{0}&	\mb{0}&-\mb{F}_2&	\mb{I}_{R_2+L}&\mb{0}&\mb{0}&	\mb{0}\\
 					\mb{G}_1^\dagger&	-\mb{I}_{R_1+L}&	\mb{0}&	\mb{0}&\mb{0}&\mb{0}&	\mb{0}\\
 	\end{array}} \right],
 \end{align}
where $L= \sum_{k=1}^{K}L_k$.

Let algebra $ D $ denote as the $ n \times  n $ block matrix and $ n=3R_1+3R_2+4L+T $,
then according to the operator-valued free probability theory,
 the Cauchy transform of $ \mb{B} $ is related to the operator-valued Cauchy transform of $ \mb{L} $ as
  \begin{align}
  	\mc{G}_\mb{B}(z) = \frac{1}{R_1}\mathrm{Tr}\left(\left\{\mc{G}_{\mb{L}}^{\mc{D}}(\mb{\Lambda}(z))\right\}^{(1,1)}\right),\label{eqGB_GL}
  \end{align}
  where $\{\cdot\}^{(1,1)}$ denotes the upper-left $R_1\times R_1$ matrix block and the operator-valued Cauchy transform $ \mc{G}_{\mb{L}}^{\mc{D}}(\mb{\Lambda}(z))  $ is defined as 
  \begin{align}
  \mc{G}_{\mb{L}}^{\mc{D}}(\mb{\Lambda}(z)) = \mathrm{id}\circ\mbb{E}_{\mc{D}}\left[\left(\mb{\Lambda}(z) - \mb{L}\right)^{-1}\right],
  \end{align}
 where $\mathrm{id}$ denotes the identity operator on a Hilbert space and
  the diagonal matrix $\mb{\Lambda}(z)$ is defined as
  \begin{align}
  	\mb{\Lambda}(z) = \begin{bmatrix}
  		z\mb{I}_{R_1} & \mb{0}_{R_1\times (n-R_1)}\\
  		\mb{0}_{(n-R_1)\times R_1} & \mb{0}_{(n-R_1)\times (n-R_1)} 
  	\end{bmatrix}.\label{eqLambdaz}
  \end{align}
The notation $\mbb{E}_\mc{D}\left[\mb{X}\right]$ is defined as
the expectation of  the $n \times n$ block matrix of $ \mb{X} $, which can be expressed as \eqref{eqEDX} at the bottom of this page, in which the notation $ { \{ \mb{A} \} ^{b}_{a}}    $ means extracting the submatrix of $\bf A$ with elements of the rows and columns with indices from $ a $ to $ b $, i.e.,  $  \left[\{ \mb{A} \} ^{b}_{a}\right]_{i,j} = \left[ \mb{A}\right]_{i+a-1,j+a-1}  $ for $ 1\le i,j\le b-a+1$.
The expectations $ \mbb{E}[\mb{X}_{\widetilde{\mb{D}}_1}] $, $ \mbb{E}[\mb{X}_{\widetilde{\mb{D}}_2}] $, $ \mbb{E}[\mb{X}_{\widetilde{\mb{C}}_1}] $ and $ \mbb{E}[\mb{X}_{\widetilde{\mb{C}}_2}] $ are defined as
\begin{align}
\mbb{E}[\mb{X}_{\widetilde{\mb{D}}_1}] &=\text{diag}\{  \mbb{E}\left[\{\mb{X}\}_{R_1+1}^{2R_1}  \right],\mbb{E}\left[\{\mb{X}\}_{2R+1}^{2R+L_1}\right], \notag \\
	& \quad \quad \quad \quad \quad \quad  \cdots, 
	\mbb{E}\left[\{\mb{X}\}_{2R_1+\sum_{k=1}^{K-1}L_k+1 }^{2R_1+\sum_{k=1}^{K}L_k}\right]
	\},\\
	\mbb{E}[\mb{X}_{\widetilde{\mb{D}}_2}] &=\text{diag}\{  \mbb{E}\left[\{\mb{X}\}_{2R_1+2R_2+2L+T+1}^{2R_1+3R_2+2L+T}  \right],\mbb{E}\left[\{\mb{X}\}_{D+1}^{D+L_1}\right],  \notag \\ &
 \quad \quad \quad \quad \quad \quad  \cdots,
	\mbb{E}\left[\{\mb{X}\}_{D+\sum_{k=1}^{K-1}L_k+1 }^{D+\sum_{k=1}^{K}L_k}\right]
	\},\\
	 \mbb{E}[\mb{X}_{\widetilde{\mb{C}}_1}] &=\text{diag}\{  \mbb{E}\left[\{\mb{X}\}_{2R_1+L+T+1}^{2R_1+R_2+L+T}  \right],\mbb{E}\left[\{\mb{X}\}_{C_1+1}^{C_1+L_1}\right], \notag \\
	& \quad \quad \quad \quad \quad \quad  \cdots,   
	\mbb{E}\left[\{\mb{X}\}_{C_1+\sum_{k=1}^{K-1}L_k+1 }^{C_1+\sum_{k=1}^{K}L_k}\right]
	\}. \\
		 \mbb{E}[\mb{X}_{\widetilde{\mb{C}}_2}]& =\! \text{diag}\{  \mbb{E}\left[\{\mb{X}\}_{2R_1+3R_2+3L+T+1}^{3R_1+3R_2+3L+T}  \right],\mbb{E}\left[\{\mb{X}\}_{C_2+1}^{C_2+L_1}\right], \notag \\
	& \quad \quad \quad \quad \quad \quad   \cdots, 
	\mbb{E}\left[\{\mb{X}\}_{C_2+\sum_{k=1}^{K-1}L_k+1 }^{C_2+\sum_{k=1}^{K}L_k}\right]
	\},\\
	 \mbb{E}[\mb{X}_{\widetilde{\mb{D}}_{3}}]& =\! \text{diag}\{  \mbb{E}\left[\{\mb{X}\}_{R_1+1:2R_1}^{D-R_2+1:D}  \right],\mbb{E}\left[\{\mb{X}\}_{2R_1+1:2R_1+L_1}^{D+1:D+L_1}\right], \notag \\
	& \quad \quad \quad    \cdots, 
	\mbb{E}\left[\{\mb{X}\}_{2R_1+\sum_{k=1}^{K-1}L_k+1:2R_1+L }^{D+\sum_{k=1}^{K-1}L_k+1:D+L}\right]
	\}, \\
		 \mbb{E}[\mb{X}_{\widetilde{\mb{D}}_{4}}]& =\! \text{diag}\{  \mbb{E}\left[\{\mb{X}\}_{D-R_2+1:D}^{R_1+1:2R_1}  \right],\mbb{E}\left[\{\mb{X}\}_{D+1:D+L_1}^{2R_1+1:2R_1+L_1}\right], \notag \\
	& \quad \quad \quad    \cdots, 
	\mbb{E}\left[\{\mb{X}\}_{D+\sum_{k=1}^{K-1}L_k+1:D+L2 }^{R_1+\sum_{k=1}^{K-1}L_k+1:2R_1+L}\right]
	\},
\end{align}
where $D=2R_1+3R_2+2L+T$, $C_1=2R_1+R_2+L+T$ and $C_2=3R_1+3R_2+3L+T$. The notation $ { \{ \mb{A} \} ^{a:b}_{c:d}}    $ means extracting the submatrix of $\bf A$ with elements of the rows with indices from $ c $ to $ d $ and columns with indices from $ a $ to $ b $.

Consider the large-dimensional regime as 
 \begin{align}
 T \to \infty, L\to \infty,R_i\to \infty, L/T=\epsilon_1, R_i/T=\epsilon_2, \label{infty}
 \end{align}
where $ \epsilon_1 $ and $ \epsilon_2 $ are constants,
 then based on the operator-valued free probability theory, we can obtain the Cauchy transform $ \mc{G}_\mb{B}(z) $ in the following proposition:
{\color{black}
	\begin{proposition}\label{prop_cauchy1}
	The Cauchy transform of $\mb{B}$, with $z\in\mbb{C}^+$ and \eqref{infty} holding, is given by
	\begin{align}\label{eqGB}
		\mc{G}_{\mb{B}}(z) = \frac{1}{R_1}\mathrm{Tr}\left[	\mc{G}_{1}(z)\right],
	\end{align}
	where 	$ \mc{G}_{1}(z) $ is defined as 
	\begin{align}
		\mc{G}_{1}(z)	=\left(\widetilde{\mb{\Psi}}- \overline{\mb{G}}_1\left( -\widetilde{\mb{\Phi}}_{1,1}- \mb{A}_1^{-1}\right)\overline{\mb{G}}_1^\dagger \right),
	\end{align}
which	is	obtained by solving the fixed point matrix-valued equations iteratively,  shown as  equations \eqref{A1}-\eqref{G9} at the top of the next page,
 \begin{figure*}[t] 
	{
		\begin{align}
& \mb{A}_1	=\left(-\widetilde{\mb{\Psi}}_{1,1}-\overline{\mb{F}}_1\left( -\mb{\mho}-\overline{\mb{F}}_2^\dagger\mb{\Pi}^{-1}\overline{\mb{F}}_2 \right)^{-1}\overline{\mb{F}}_1^\dagger +\widetilde{\mb{\Psi}}_{1,2}\left(-\widetilde{\mb{\Psi}}_{2,2}+\overline{\mb{F}}_2\mb{\mho}^{-1}\overline{\mb{F}}_2^\dagger-\mb{\Omega}^{-1}  \right)^{-1}\mb{\Omega}^{-1}\overline{\mb{F}}_1^\dagger  \right.\notag \\
& \quad  \quad \quad  \left. + \overline{\mb{F}}_1\mb{\Omega}^{-1}\left(-\widetilde{\mb{\Psi}}_{2,2}+\overline{\mb{F}}_2\mb{\mho}^{-1}\overline{\mb{F}}_2^\dagger  \right)^{-1} \widetilde{\mb{\Psi}}_{2,1}  \widetilde{\mb{\Psi}}_{1,2}\overline{\mb{F}}_1\left(-\widetilde{\mb{\Psi}}_{2,2}+\overline{\mb{F}}_2\mb{\mho}^{-1}\overline{\mb{F}}_2^\dagger-\mb{\Omega}^{-1}  \right)^{-1}\widetilde{\mb{\Psi}}_{2,1} \right)^{-1}, \label{A1}	\\
&\mc{G}_{2}(z)=\left( -\widetilde{\mb{\Psi}}_{1,1}-\mb{\Sigma}^{-1}-\overline{\mb{F}}_1\mb{A}_2\overline{\mb{F}}_1^\dagger +\widetilde{\mb{\Psi}}_{1,2}\mb{A}_3\overline{\mb{F}}_1^\dagger+\overline{\mb{F}}_1\mb{A}_4\widetilde{\mb{\Psi}}_{2,1}-\widetilde{\mb{\Psi}}_{1,2}\mb{A}_5\widetilde{\mb{\Psi}}_{2,1} \right)^{-1},\\
&\mc{G}_{3}(z)=\left( -\mho-\overline{\mb{F}}_1^\dagger\mb{O}^{-1}\overline{\mb{F}}_1-\left(\overline{\mb{F}}_2^\dagger+\overline{\mb{F}}_1^\dagger\mb{O}^{-1}\widetilde{\mb{\Psi}}_{1,2} \right)\left(-\widetilde{\mb{\Psi}}_{2,2}-\widetilde{\mb{\Psi}}_{2,1}\mb{O}^{-1}\widetilde{\mb{\Psi}}_{1,2}-\mb{\Omega}^{-1}\right)^{-1}\left(\overline{\mb{F}}_2+\widetilde{\mb{\Psi}}_{2,1}\mb{O}^{-1}\overline{\mb{F}}_1 \right)  \right)^{-1},\\
&\mc{G}_{4}(z)=\left( \mb{\Omega}- \left(-\widetilde{\mb{\Psi}}_{2,2}- \left(\overline{\mb{F}}_2+\widetilde{\mb{\Psi}}_{2,1}\mb{O}^{-1}\overline{\mb{F}}_1 \right)\left( -\mho-\overline{\mb{F}}_1^\dagger\mb{O}^{-1}\overline{\mb{F}}_1 \right)^{-1} \left(\overline{\mb{F}}_2^\dagger+\overline{\mb{F}}_1^\dagger\mb{O}^{-1}\widetilde{\mb{\Psi}}_{1,2} \right) \right)^{-1} \right)^{-1},\\
&\mc{G}_{5}(z)=\! \left(\!-\mb{\Phi}_{2}-\mb{I}\!- \! \overline{\mb{G}}_2\left(\!-\widetilde{\mb{\Phi}}_{2} -\left(\!-\widetilde{\mb{\Psi}}_{2,2}- \left(\!\overline{\mb{F}}_2+\widetilde{\mb{\Psi}}_{2,1}\mb{O}^{-1}\overline{\mb{F}}_1 \right)\left(\! -\mho-\!\overline{\mb{F}}_1^\dagger\mb{O}^{-1}\overline{\mb{F}}_1 \right)^{-1} \left(\overline{\mb{F}}_2^\dagger+\!\overline{\mb{F}}_1^\dagger\mb{O}^{-1}\widetilde{\mb{\Psi}}_{1,2} \!\right)\! \right)^{-1} \!\right)^{-1}\overline{\mb{G}}_2^\dagger\! \right)^{-1},\\
&\mc{G}_{6}(z)=\left(-\widetilde{\mb{\Psi}}_{2,2}- \left(\overline{\mb{F}}_2+\widetilde{\mb{\Psi}}_{2,1}\mb{O}^{-1}\overline{\mb{F}}_1 \right)\left( -\mho-\overline{\mb{F}}_1^\dagger\mb{O}^{-1}\overline{\mb{F}}_1 \right)^{-1} \left(\overline{\mb{F}}_2^\dagger+\overline{\mb{F}}_1^\dagger\mb{O}^{-1}\widetilde{\mb{\Psi}}_{1,2} \right) -\mb{\Omega}^{-1}\right)^{-1},\\
&\mc{G}_{7}(z)=\left( \mb{\Sigma}-\mb{A}_1\right)^{-1},\\
&\mc{G}_{8}(z)=\mc{G}_{2}(z)\overline{\mb{F}}_1\mb{A}_2-\mc{G}_{2}(z)\widetilde{\mb{\Psi}}_{1,2}\left(-\widetilde{\mb{\Psi}}_{2,2}+\overline{\mb{F}}_2\mb{\mho}^{-1}\overline{\mb{F}}_2^\dagger  \right)^{-1},\\
&\mc{G}_{9}(z)=\mc{G}_{3}(z)\left(\overline{\mb{F}}_2^\dagger+\overline{\mb{F}}_1^\dagger\mb{O}^{-1}\widetilde{\mb{\Psi}}_{1,2} \right)\left(-\widetilde{\mb{\Psi}}_{2,2}-\widetilde{\mb{\Psi}}_{2,1}\mb{O}^{-1}\widetilde{\mb{\Psi}}_{1,2}-\mb{\Omega}^{-1}\right)^{-1}. \label{G9}
		\end{align}
	}
	\hrulefill
\end{figure*}
in which the matrices $\mb{A}_2$-$\mb{A}_5$ and  the matrix functions $ \widetilde{\mb{\Psi}} $, $	{\mb{\Omega}}$, $	{\mb{\Pi}}$, $	\mb{\Sigma}$, and $	\mb{\Psi}_{i,j}$, $	\widetilde{\mb{\Psi}}_{i,j}$, $	\widetilde{\mb{\Phi}}_{i}$ and $	\mb{\Phi}_{i}$ are given in Appendix~\ref{prop1}.
\end{proposition}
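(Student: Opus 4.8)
The plan is to prove Proposition~\ref{prop_cauchy1} in three stages: first, establish the algebraic linearization that ties $\mc{G}_\mb{B}(z)$ to the operator-valued Cauchy transform of $\mb{L}$; second, invoke the operator-valued self-consistency (Dyson) equation that $\mc{G}_\mb{L}^\mc{D}$ satisfies in the regime \eqref{infty}; and third, reduce that matrix equation block by block to the explicit fixed-point system \eqref{A1}--\eqref{G9}.

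For the first stage, I would verify that $\mb{L}$ is a genuine linearization of $\mb{B}$, i.e.\ that the Schur complement of $\mb{\Lambda}(z)-\mb{L}$ with respect to every block except the leading $R_1\times R_1$ one reproduces $z\mb{I}_{R_1}-\mb{B}$. This is a purely algebraic check: I eliminate the auxiliary blocks one at a time through the off-diagonal couplings $\mb{G}_i$, $\mb{F}_i$, and $-\mb{I}$ appearing in $\mb{L}$, and confirm that the residual equals $z\mb{I}_{R_1}-\mb{H}_1(\mb{H}_2^\dagger\mb{H}_2)^{-1}\mb{H}_1^\dagger$. Taking the $\mc{D}$-valued expectation \eqref{eqEDX} and the normalized trace then yields \eqref{eqGB_GL}.

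The second stage is the probabilistic core. Writing $\mb{L}=\overline{\mb{L}}+\widetilde{\mb{L}}$, with $\overline{\mb{L}}$ collecting the deterministic LoS blocks $\overline{\mb{G}}_i,\overline{\mb{F}}_i$ and $\widetilde{\mb{L}}$ the centered Gaussian parts built from $\widetilde{\mb{F}}_k$ and $\widetilde{\mb{R}}_{k,i}$, I would argue that under \eqref{infty} the entries of $\widetilde{\mb{L}}$ converge to an operator-valued semicircular element over $\mc{D}$ whose covariance is exactly the block-structured map assembled from $\eta_{k,i},\widetilde{\eta}_{k,i}$ in \eqref{yta}--\eqref{yta_qta} and $\zeta_k,\widetilde{\zeta}_k$ in \eqref{zeta-k-n}--\eqref{zeta_qta_k-n}; asymptotic freeness over $\mc{D}$ of $\overline{\mb{L}}$ and $\widetilde{\mb{L}}$ then forces the fixed point
\begin{align}
\mc{G}_\mb{L}^\mc{D}(\mb{\Lambda}(z)) = \left(\mb{\Lambda}(z)-\mbb{E}_\mc{D}[\mb{L}]-\eta\!\left(\mc{G}_\mb{L}^\mc{D}(\mb{\Lambda}(z))\right)\right)^{-1},\notag
\end{align}
where $\eta(\cdot)$ is the combined $R$-transform acting on the current value of the Cauchy transform through the correlation maps. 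The images of the diagonal and cross blocks of $\mc{G}_\mb{L}^\mc{D}$ under $\eta$ are precisely the matrix functions $\widetilde{\mb{\Psi}}$, $\mb{\Omega}$, $\mb{\Pi}$, $\mb{\Sigma}$, $\mb{\Psi}_{i,j}$, $\widetilde{\mb{\Psi}}_{i,j}$, $\widetilde{\mb{\Phi}}_i$, and $\mb{\Phi}_i$ collected in Appendix~\ref{prop1}.

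The hardest stage, and the main obstacle, is inverting the resulting $7\times7$ block matrix $\mb{\Lambda}(z)-\mbb{E}_\mc{D}[\mb{L}]-\eta(\mb{G})$ and closing the system. Because the covariance map is \emph{not} block-diagonal---the couplings $\mbb{E}[\mb{X}_{\widetilde{\mb{D}}_{3}}]$ and $\mbb{E}[\mb{X}_{\widetilde{\mb{D}}_{4}}]$ in \eqref{eqEDX} mix the second and sixth meta-blocks---the inverse cannot be read off directly and instead requires a carefully ordered sequence of Schur-complement eliminations. I would eliminate the blocks in an order that keeps every intermediate complement invertible, tracking how $\overline{\mb{F}}_1,\overline{\mb{F}}_2,\overline{\mb{G}}_1,\overline{\mb{G}}_2$ and the $\widetilde{\mb{\Psi}}_{i,j}$ propagate, thereby producing $\mb{A}_1$ in \eqref{A1}, the auxiliary quantities $\mb{A}_2$--$\mb{A}_5$ and $\mc{G}_2$--$\mc{G}_9$ up to \eqref{G9}, and finally the leading block $\mc{G}_1(z)$. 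Substituting $\mc{G}_1(z)$ into \eqref{eqGB_GL} gives \eqref{eqGB}, completing the proof; the bookkeeping of this non-diagonal block inversion is where essentially all of the technical effort concentrates.
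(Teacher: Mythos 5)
Your proposal is correct and follows essentially the same route as the paper's own proof: linearize $\mb{B}$ via the rational-function linearization trick, split $\mb{L}=\overline{\mb{L}}+\widetilde{\mb{L}}$ into a deterministic part and an operator-valued semicircular part whose covariance is assembled from the one-sided correlation maps, impose the operator-valued subordination fixed point over $\mc{D}$, and unwind the resulting $7\times 7$ block inverse (with the non-diagonal couplings $\widetilde{\mb{\Psi}}_{1,2},\widetilde{\mb{\Psi}}_{2,1}$) by ordered Schur complements to reach \eqref{A1}--\eqref{G9}. The only slip is notational: the deterministic part $\overline{\mb{L}}$ does not lie in $\mc{D}$, so the fixed point must keep the conditional expectation outside the inverse and use $\overline{\mb{L}}$ rather than $\mbb{E}_\mc{D}[\mb{L}]$, i.e.\ $\mc{G}_{\mb{L}}^{\mc{D}}(\mb{\Lambda}(z))=\mbb{E}_{\mc{D}}\bigl[\bigl(\mb{\Lambda}(z)-\mc{R}_{\widetilde{\mb{L}}}^{\mc{D}}\bigl(\mc{G}_{\mb{L}}^{\mc{D}}(\mb{\Lambda}(z))\bigr)-\overline{\mb{L}}\bigr)^{-1}\bigr]$, which is what your subsequent block elimination (tracking $\overline{\mb{F}}_i$ and $\overline{\mb{G}}_i$) in fact implements.
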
}


\begin{proof}
	The proof of Proposition~\ref{prop_cauchy1} is given in Appendix~\ref{appx_prop_cauchy}.
	
\end{proof}

With Proposition 1 and the relationship between the Cauchy transform $ \mc{G}_\mb{B}(z) $ and the Cauchy transform $ \mc{G}_\mu(z) $ in \eqref{Gu}, 
we obtain the information rate for two users with \eqref{R1_int} and \eqref{R2_int} by the numerical integration method. For the case $ R_2 < T < R_1+R_2 $, the Cauchy transforms for the two users are the same as in Proposition~\ref{prop_cauchy1}. The proof is similar to the case $T \leq R_2$ and therefore we omit it.
\subsection{Asymptotic Expression for Power Normalization Factor}
According to \cite{chenTWC2019}, the  power normalization factor for the GSVD precoder can be reformulated as 
\begin{align}
t=\text{Tr}\left(\mb{H}_1^\dagger\mb{H}_1+\mb{H}_2^\dagger\mb{H}_2\right)^{-1}.
\end{align}
Therefore, based on the operator-valued free probability theory and Anderson’s linearization trick for the matrix polynomials\cite{Belinschi}, we can obtain the asymptotic expression of the power normalization factor in the following proposition:
\begin{proposition}\label{prop_t}
	Define $\overline{\mb{G}}=\textup{diag}\{\overline{\mb{G}}_1,\overline{\mb{G}}_2\}$ and $\overline{\mb{F}}=[\overline{\mb{F}}_{1}^\dagger,\overline{\mb{F}}_{2}^\dagger]^\dagger$, 
	with \eqref{infty} holding and  $z \to 0$, the asymptotic expression of the power normalization factor  is given by 
	\begin{align}
	t=-\textup{Tr}(\hat{\mc{G}}_1),  \label{R2}
	\end{align}
	where 	$ \hat{\mc{G}}_1 $ satisfies the following matrix-valued equations
	\begin{align}
&	\hat{\mc{G}}_{1}(z) = \left(\widetilde{\mb{\Psi}}_{\textup{t}} \!- \!\overline{\mb{F}}^\dagger \left(-\widetilde{\mb{\Psi}}_{\textup{t}}-\!\left(-\widetilde{\mb{\Phi}}_{\textup{t}} - \! \overline{\mb{G}}^\dagger \mb{\Phi}_\textup{t}^{-1} \overline{\mb{G}}\right)^{-1}\right)^{-1}\overline{\mb{F}} \right)^{-1},\label{eqGCt}\\
	&	\hat{\mc{G}}_{2}(z)
=		 \left(-\widetilde{\mb{\Phi}}_{\textup{t}} - \overline{\mb{G}}^\dagger\mb{\Phi}_{\textup{t}}^{-1}\overline{\mb{G}} + \left(\widetilde{\mb{\Psi}}_{\textup{t}} + \overline{\mb{F}} \widetilde{\mb{\Psi}}_{\textup{t}}^{-1}\overline{\mb{F}}^\dagger\right)^{-1}\right)^{-1} ,\label{eqGDk} \\
&	\hat{	\mc{G}}_{3}(z) = \left(\mb{\Phi}_{\textup{t}} - \overline{\mb{G}} \left(\widetilde{\mb{\Phi}}_{\textup{t}} - \left(\widetilde{\mb{\Psi}}_{\textup{t}} + \overline{\mb{F}}\widetilde{\mb{\Psi}}_{\textup{t}}^{-1}\overline{\mb{F}}^\dagger \right)^{-1}\right)^{-1} \overline{\mb{G}}^\dagger \right)^{-1},\label{eqGDt}\\
&	\hat{	\mc{G}}_{4}(z)	= \left(-\widetilde{\mb{\Psi}}_{\textup{t}}- \overline{\mb{F}} \widetilde{\mb{\Psi}}_{\textup{t}}^{-1}\overline{\mb{F}}^\dagger + \left( \widetilde{\mb{\Phi}}_{\textup{t}} + \overline{\mb{G}}^\dagger\mb{\Phi}_{\textup{t}}^{-1}\overline{\mb{G}} \right)^{-1} \right)^{-1},\label{eqGCk}
\end{align}
where the matrices $	\widetilde{\mb{\Psi}}_\textup{t}$, $	\mb{\Phi}_\textup{t}$, $	\widetilde{\mb{\Phi}}_{\textup{t}}$, and $	\widetilde{\mb{\Psi}}_{\textup{t}}$ are given by 
\begin{align}
&\quad \quad \quad\quad  \widetilde{\mb{\Psi}}_\textup{t}=z\mb{I}-\mb{\Psi}, \\
&\quad \quad \quad \quad \mb{\Phi}_\textup{t}=\textup{diag}\left\{\sum_{k=1}^{K}\widetilde{\eta}_{k,1}(\hat{\mc{G}}_{2}),\sum_{k=1}^{K}\widetilde{\eta}_{k,2}(\hat{\mc{G}}_{2}) \right\}, \\
&\quad \quad \quad \quad \widetilde{\mb{\Phi}}_{\textup{t}}=\textup{diag}\{\widetilde{\mb{\Phi}}_{1}(	\hat{	\mc{G}}_{3}),\widetilde{\mb{\Phi}}_{2}(	\hat{	\mc{G}}_{3}) \},  \\
& \quad \quad \quad \quad \widetilde{\mb{\Psi}}_{\textup{t}}=
\left(\! {\begin{array}{*{20}{c}}
		{\widetilde{\mb{\Psi}}_{1,1}(\hat{\mc{G}}_{1})}&{\widetilde{\mb{\Psi}}_{1,2}(\hat{\mc{G}}_{1})}\\
		{\widetilde{\mb{\Psi}}_{2,1}(\hat{\mc{G}}_{1})}&{\widetilde{\mb{\Psi}}_{2,2}(\hat{\mc{G}}_{1})}
\end{array}}\!  \right),\\
&\mb{\Psi}_{\textup{t}}=\mb{\Psi}_{1,1}(\hat{	\mc{G}}_{4,1})+\mb{\Psi}_{1,2}(\hat{	\mc{G}}_{4,2})+\mb{\Psi}_{2,1}(\hat{	\mc{G}}_{4,3})+\mb{\Psi}_{2,2}(\hat{	\mc{G}}_{4,4}).
\end{align}

\end{proposition}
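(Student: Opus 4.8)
The plan is to reduce the power normalization factor to the limiting value of a matrix Cauchy transform and then evaluate that transform through operator-valued free probability, in direct parallel with the treatment of $\mc{G}_{\mb{B}}(z)$ in Proposition~\ref{prop_cauchy1}. Starting from $t=\text{Tr}(\mb{H}_1^\dagger\mb{H}_1+\mb{H}_2^\dagger\mb{H}_2)^{-1}$, I would write $t=-\lim_{z\to 0}\text{Tr}\big((z\mb{I}_T-\mb{M})^{-1}\big)$ with $\mb{M}=\mb{H}_1^\dagger\mb{H}_1+\mb{H}_2^\dagger\mb{H}_2$, since the resolvent evaluated at $z=0$ returns $-\mb{M}^{-1}$. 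Using the factorization $\mb{H}_i=\mb{G}_i\mb{F}_i$ together with the block/stacked notation $\mb{G}=\text{diag}\{\mb{G}_1,\mb{G}_2\}$ and $\mb{F}=[\mb{F}_1^\dagger,\mb{F}_2^\dagger]^\dagger$ (whose LoS parts are exactly the $\overline{\mb{G}}$ and $\overline{\mb{F}}$ of the statement), the quadratic form collapses to $\mb{M}=\mb{F}^\dagger\mb{G}^\dagger\mb{G}\mb{F}$, a polynomial matrix function of the underlying random matrices which is simpler than the rational function $\mb{B}$.

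Next I would apply Anderson's linearization trick for matrix polynomials to construct a self-adjoint block pencil $\hat{\mb{L}}$, analogous to $\mb{L}$ in Proposition~\ref{prop_cauchy1}, whose relevant Schur complement reproduces $z\mb{I}_T-\mb{M}$; concretely $\hat{\mb{L}}$ embeds $\mb{F}$ and $\mb{G}$ as off-diagonal blocks together with auxiliary identity blocks that absorb the inner product $\mb{G}\mb{F}$ and the outer conjugation $(\cdot)^\dagger(\cdot)$. By design, $\text{Tr}((z\mb{I}_T-\mb{M})^{-1})$ equals the trace of the $T$-dimensional diagonal block of $(\mb{\Lambda}(z)-\hat{\mb{L}})^{-1}$, with $\mb{\Lambda}(z)$ placing $z$ only on that block. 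I would then split $\hat{\mb{L}}$ into its deterministic LoS part, carrying $\overline{\mb{G}}$ and $\overline{\mb{F}}$, and its zero-mean Gaussian part; under the large-dimensional regime \eqref{infty} and the Weichselberger model \eqref{eqFk}--\eqref{eqGk-2}, the Gaussian blocks become asymptotically operator-valued semicircular and free from the deterministic blocks over the algebra $\mc{D}$ of block matrices.

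With this structure the operator-valued Cauchy transform obeys the matrix Dyson fixed point $\mc{G}^{\mc{D}}=\mbb{E}_{\mc{D}}\big[(\mb{\Lambda}(z)-\overline{\hat{\mb{L}}}-\eta(\mc{G}^{\mc{D}}))^{-1}\big]$, where the covariance map $\eta$ is assembled block-by-block from the one-sided correlation operators $\eta_{k,i},\widetilde{\eta}_{k,i}$ attached to the $\mb{R}_{k,i}$ blocks and $\zeta_k,\widetilde{\zeta}_k$ attached to the $\mb{F}_k$ blocks, as defined in \eqref{yta}--\eqref{zeta-k-n-center}; these generate the matrix functions $\mb{\Psi}_{i,j},\widetilde{\mb{\Psi}}_{i,j},\mb{\Phi}_i,\widetilde{\mb{\Phi}}_i$ used in the statement. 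Writing the fixed point blockwise, the distinct diagonal blocks of $\mc{G}^{\mc{D}}$ are identified with $\hat{\mc{G}}_1,\dots,\hat{\mc{G}}_4$, and successively eliminating the auxiliary blocks by Schur complements produces the coupled system \eqref{eqGCt}--\eqref{eqGCk}. Finally, the $T$-dimensional block $\hat{\mc{G}}_1$ converges in trace to $(z\mb{I}_T-\mb{M})^{-1}$, so letting $z\to 0$ gives $\text{Tr}(\hat{\mc{G}}_1)\to-\text{Tr}(\mb{M}^{-1})=-t$, i.e. $t=-\text{Tr}(\hat{\mc{G}}_1)$.

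I expect the main obstacle to be the Schur-complement reduction of the full pencil down to the compact four-equation system. Because Weichselberger's correlation couples the two users through the off-diagonal covariance blocks $\widetilde{\mb{\Psi}}_{1,2}$ and $\widetilde{\mb{\Psi}}_{2,1}$, the operator-valued $R$-transform is non-diagonal, so eliminating the intermediate blocks requires inverting a large block matrix with genuine cross terms rather than a block-diagonal one, which is precisely the non-diagonal $R$-transform inversion noted in the introduction. Establishing self-adjointness of $\hat{\mb{L}}$, the asymptotic freeness on $\mc{D}$, and the finiteness and positivity of the $z\to 0$ limit would complete the argument.
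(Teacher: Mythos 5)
Your proposal follows essentially the same route as the paper: rewrite $t$ as the $z\to 0$ limit of the trace of the resolvent of $\mb{B}_{\textup{t}}=\mb{H}_1^\dagger\mb{H}_1+\mb{H}_2^\dagger\mb{H}_2=\mb{F}^\dagger\mb{G}^\dagger\mb{G}\mb{F}$, apply Anderson's linearization trick to build the self-adjoint block pencil in $\overline{\mb{F}},\overline{\mb{G}}$ and their fluctuation parts, invoke operator-valued semicircularity/freeness and the subordination (Dyson) fixed point over $\mc{D}$, and Schur-reduce to the four coupled equations before letting $z\to 0$. The only difference is that the paper compresses the fixed-point derivation into a citation of \cite[Prop.~2]{ZhengTCom2023}, whereas you unpack those same steps explicitly; the substance is identical and your argument is correct.
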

\begin{proof}
	The proof of Proposition~\ref{prop_t} is given in Appendix~\ref{appx_prop_t}.
\end{proof}

\section{Closed-Form expressions of the Asymptotic Information Rate and the Proposed PGAM Optimization Algorithm}
\subsection{Closed-Form Expressions of the Asymptotic Information Rate}
To obtain a deep insight into the considered multi-STAR-RISs assisted MIMO-NOMA system with GSVD-precoding, we consider a special
case by setting the channels between the BS and the STAR-RISs deterministic, i.e., 
	$ \mb{F}_{k}  = \overline{\mb{F}}_{k}$.  With the simplified model, the matrix functions $\mc{G}_1(z)$-$\mc{G}_7(z)$ in  Proposition~\ref{prop_cauchy1} can be simplified as 
		\begin{align}
			&	\mc{G}_{1}(z)	=\left(\widetilde{\mb{\Psi}}- \overline{\mb{G}}_1\left( -\widetilde{\mb{\Phi}}_{1,1}- \mb{A}_1^{-1}\right)\overline{\mb{G}}_1^\dagger \right), \label{G1_simply}	\\
						& \mb{A}_1	=	\left(-\widetilde{\mb{\Psi}}_{1,1}-\overline{\mb{F}}_1\left(-\mho -\overline{\mb{F}}_2^\dagger\mb{\Pi}^{-1}\overline{\mb{F}}_2 \right)^{-1}\overline{\mb{F}}_1^\dagger    \right)^{-1},	
				\end{align}	
				\begin{align}
			&\mc{G}_{2}(z)=\left( -\widetilde{\mb{\Psi}}_{1,1}-\mb{\Sigma}^{-1}-\overline{\mb{F}}_1\mb{A}_2\overline{\mb{F}}_1^\dagger  \right)^{-1},\\
			&\mc{G}_{3}(z)=\left( -\mho-\overline{\mb{F}}_1^\dagger\mb{O}^{-1}\overline{\mb{F}}_1-\overline{\mb{F}}_2^\dagger \mb{\Pi}^{-1}\overline{\mb{F}}_2   \right)^{-1},\\
			&\mc{G}_{4}(z)=\left( \mb{\Omega}+\left(\widetilde{\mb{\Psi}}_{2,2}- \overline{\mb{F}}_2 \left( \mho+\overline{\mb{F}}_1^\dagger\mb{O}^{-1}\overline{\mb{F}}_1 \right)^{-1} \overline{\mb{F}}_2^\dagger \right)^{-1} \right)^{-1},\\
			&\mc{G}_{5}(z)=\left(-\mb{\Phi}_{2}-\mb{I}-  \overline{\mb{G}}_2\left(-\widetilde{\mb{\Phi}}_{2} -\left(-\widetilde{\mb{\Psi}}_{2,2}
			 \right.\right.\right. \notag \\
			& \quad  \left. \left. \left.- \overline{\mb{F}}_2 \left(-\mho-\overline{\mb{F}}_1^\dagger\mb{O}^{-1}\overline{\mb{F}}_1 \right)^{-1} \overline{\mb{F}}_2^\dagger  \right)^{-1} \right)^{-1}\overline{\mb{G}}_2^\dagger \right)^{-1},\\
			&\mc{G}_{6}(z)=\left(\mb{\Pi}- \overline{\mb{F}}_2 \left( -\mho-\overline{\mb{F}}_1^\dagger\mb{O}^{-1}\overline{\mb{F}}_1 \right)^{-1} \overline{\mb{F}}_2^\dagger \right)^{-1},\\
			&\mc{G}_{7}(z)=\left( \mb{\Sigma}-\mb{A}_1\right)^{-1}, \label{G7_simply}
		\end{align}
where the matrices $ \widetilde{\mb{\Psi}} $, $	{\mb{\Omega}}$, $	{\mb{\Pi}}$, $	\mb{\Sigma}$, and $	\mb{\Psi}_{i,j}$, $	\widetilde{\mb{\Psi}}_{i,j}$, $	\widetilde{\mb{\Phi}}_{i}$ and $	{\mb{\Phi}}_{i}$ are given in Proposition~\ref{prop_cauchy1} by setting $\mc{G}_{8}(z)$ and $\mc{G}_{9}(z)$ to zero matrices.  Then based on the relationship between  the Cauchy transformation  $\mc{G}_\mb{B}(z)$ and the information rate for two users with \eqref{R1_int} and \eqref{R2_int}, we can derive the closed-form expressions of the asymptotic information rate in the following proposition:
\begin{proposition}\label{prop_shano}
	With \eqref{infty} holding and 	$ \mb{F}_{k}  = \overline{\mb{F}}_{k}$, the asymptotic information rates  for the two users  are given by 
	\begin{align}
		\overline{I}_1=&\phi(-\frac{1}{1+\alpha_1})-\phi(-1)+R_1\log\left({1+\alpha_1}\right), \label{R1} \\
		\overline{I}_2=&\phi(-1-\alpha_2)-(R_1-S)\log(1+\alpha_2)-\phi(-1-\kappa_1\alpha_2) \notag \\
		&	+(R_1-S)\log(1+\kappa_1\alpha_2),  \label{R2_simply}
	\end{align}
	where $\alpha_i=\frac{\rho_i}{t\sigma_0^2}$ and $ \phi(z) $ is given by 
	\begin{align}
	&	\phi(z)=\log\det\left({\widetilde{\mb{\Psi}}(z)}\right)+\log\det\left({{\mb{\Sigma}}(z)}\right)+\log\det\left(-\mb{\Phi}_{2}-\mb{I}\right) \notag \\
	&	+ \!\log\det\left(	\mb{\Omega} \right)+\!	\log\det\left({\mb{\Pi}}\right)+\!
		\log\det\left({\mb{O}}\right)+ \!\log\det\left(	\mc{G}_3(z)^{-1} \right) \notag \\
		&+   \textup{Tr}\left({\mb{\Phi}}_{1} \mc{G}_{1}(z)\right)
	 +\textup{Tr}\left({\mb{\Phi}}_{2} \mc{G}_{5}(z)\right)+	 \textup{Tr}\left(\widetilde{{\mb{\Psi}}}_{1,1} \mc{G}_{2}(z)\right) \notag \\
	& +\textup{Tr}\left(\widetilde{{\mb{\Psi}}}_{2,2} \mc{G}_{6}(z)\right),
	\end{align}
	where the involved matrices  are given by \eqref{G1_simply}-\eqref{G7_simply} and   Proposition 1 with setting $\mb{\Psi}_{1,2}$, $\mb{\Psi}_{2,1}$, $\widetilde{\mb{\Psi}}_{1,2}$, $\widetilde{\mb{\Psi}}_{2,1}$, $\mc{G}_{8}(z)$ and $\mc{G}_{9}(z)$ to zero matrices.
\end{proposition}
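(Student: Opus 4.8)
The plan is to prove Proposition~\ref{prop_shano} by identifying $\phi(z)$ as a potential (free-energy) function for the linearized model of Proposition~\ref{prop_cauchy1}, whose $z$-derivative reproduces the trace of the resolvent block $\mc{G}_1(z)$, and then evaluating the integrals \eqref{R1_int}--\eqref{R2_int} exactly through the fundamental theorem of calculus. The central claim to establish is
\begin{align}
\frac{\mathrm{d}}{\mathrm{d}z}\phi(z) = \mathrm{Tr}\left[\mc{G}_1(z)\right] = R_1\,\mc{G}_\mb{B}(z), \label{eqphiprime}
\end{align}
after which the relation \eqref{Gu} gives $S\,\mc{G}_\mu(z) = \phi'(z) - (R_1-S)/z$, so each integrand becomes a total derivative plus an elementary $1/z$ (or $1/(1+x)$) term.

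To prove \eqref{eqphiprime} I would differentiate $\phi(z)$ term by term, using $\frac{\mathrm{d}}{\mathrm{d}z}\log\det\mb{M}(z)=\mathrm{Tr}(\mb{M}^{-1}\mb{M}')$ for the log-determinant blocks and the product rule $\frac{\mathrm{d}}{\mathrm{d}z}\mathrm{Tr}(\mb{A}\mb{M})=\mathrm{Tr}(\mb{A}'\mb{M})+\mathrm{Tr}(\mb{A}\mb{M}')$ for the trace blocks. Each matrix in $\phi$ depends on $z$ both \emph{explicitly}, through the single block $z\mb{I}_{R_1}$ inherited from $\mb{\Lambda}(z)$ in \eqref{eqLambdaz} (which survives only inside $\widetilde{\mb{\Psi}}$ and $\mb{\Sigma}$), and \emph{implicitly}, through the fixed-point matrices $\mc{G}_1(z),\dots,\mc{G}_7(z)$. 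The key structural fact is that $\phi$ is constructed precisely so that the simplified fixed-point equations \eqref{G1_simply}--\eqref{G7_simply} are its stationarity conditions, i.e.\ $\partial\phi/\partial\mc{G}_j=\mb{0}$ for every $j$. Hence all implicit contributions $\sum_j\mathrm{Tr}[(\partial\phi/\partial\mc{G}_j)\,\mc{G}_j'(z)]$ cancel, and only the explicit $z$-derivative remains; a short computation then collapses it to $\mathrm{Tr}[\mc{G}_1(z)]$, establishing \eqref{eqphiprime}.

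With \eqref{eqphiprime} in hand the integration is routine. For $\overline{I}_1$, I substitute $z=-(1+x)^{-1}$ in \eqref{R1_int}, so that $\frac{1}{(1+x)^2}\mathrm{d}x=\mathrm{d}z$ and the integrand becomes $(\phi'(z)-(R_1-S)/z)\,\mathrm{d}z$ plus the elementary $\frac{S}{1+x}\mathrm{d}x$; integrating from $x=0$ (i.e.\ $z=-1$) to $x=\alpha_1$ (i.e.\ $z=-(1+\alpha_1)^{-1}$) and combining $S\log(1+\alpha_1)$ with the $(R_1-S)\log(1+\alpha_1)$ produced by the $1/z$ term yields \eqref{R1}. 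For $\overline{I}_2$, I apply the two changes of variable $z=-(1+x)$ and $z=-(1+\kappa_1 x)$ in \eqref{R2_int}; each piece integrates to a boundary difference of $\phi$ plus an $(R_1-S)\log(\cdot)$ contribution, and summing them reproduces \eqref{R2_simply}, with the $\phi(-1)$ boundary terms cancelling as required.

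The main obstacle is establishing the stationarity of $\phi$, i.e.\ the cancellation of the implicit derivatives in the second paragraph. This is not a single identity but a bookkeeping exercise across the seven coupled matrix-valued equations with the non-diagonal $7\times7$ block R-transform structure of the linearization $\mb{L}$: one must match each $\partial\phi/\partial\mc{G}_j$ against the corresponding simplified equation in \eqref{G1_simply}--\eqref{G7_simply} and verify that the cross terms generated by $\widetilde{\mb{\Psi}}_{i,j}$, $\mb{\Phi}_i$, $\widetilde{\mb{\Phi}}_i$, $\mb{\Omega}$, $\mb{\Pi}$ and $\mb{O}$ pair up and vanish. The cancellation is structural rather than accidental, since $\phi$ is by design the free energy whose saddle point is the deterministic equivalent; nonetheless the explicit verification, together with tracking the analytic branch of the $\log(-\cdot)$ terms at the endpoints, is where the real work lies.
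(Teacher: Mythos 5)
Your proposal follows essentially the same route as the paper's Appendix D: both arguments reduce the claim to showing that $\phi$ is an antiderivative of $R_1\mc{G}_{\mb{B}}(z)=\mathrm{Tr}\left[\mc{G}_1(z)\right]$ — equivalently, that $\partial\overline{I}_i/\partial\alpha_i$ equals the derivative of the integral representations \eqref{R1_int}--\eqref{R2_int} — and then fix the additive constant through the boundary behavior at $z=-1$ (the paper phrases this as $\overline{I}_i=0$ when $\sigma_0^2\to\infty$, you as cancellation of the $\phi(-1)$ endpoint terms, which is the same thing). The only substantive difference is how the key cancellation is justified: you invoke stationarity of $\phi$ with respect to the fixed-point matrices $\mc{G}_j$ to discard the implicit derivatives (an envelope-theorem argument you correctly flag as the real work but do not carry out), whereas the paper performs exactly that bookkeeping explicitly, term by term, via $\frac{\partial}{\partial z}\log\det\mb{K}=\mathrm{Tr}(\mb{K}^{-1}\mb{K}')$, repeated applications of the matrix inversion lemma to $\mc{G}_1,\mc{G}_2,\mc{G}_4,\mc{G}_5,\mc{G}_6,\mc{G}_7$, and the trace identities for the one-sided correlation functionals from Lemma 1 of \cite{WangTWC2024}.
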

\begin{proof}
	The proof of Proposition~\ref{prop_shano} is given in Appendix~\ref{appx_shano}.
\end{proof}

\subsection{Proposed PGAM Optimization Algorithm}
\begin{algorithm}[h!]\label{op_theta}
	\caption{PGAM Optimization Algorithm} 
	\begin{algorithmic}[1]
		\State \textbf{initialize}:	Set the convergence criterion $ \varepsilon $, $j=0$,  $ \beta_{k,1,l} = \beta_{k,2,l}=0.5 $, 
		and randomly generate the phase-shifting matrix 	$ \mb{\Theta}^{(0)}$. Calculate $\bar{I}_1^{(0)}+ \bar{I}_2^{(0)} $ based on \eqref{R1} and \eqref{R2_simply}. 
		\State	\textbf{repeat} 
		\State \quad Calculate the step size $ \boldsymbol \alpha^{(i)} $ according to \cite{AbsilOp}.
		\State \quad Calculate gradient $ \boldsymbol{\delta}^{(j)}_i $ based on \eqref{grad}.
		\State \quad Update $\boldsymbol \theta^{(i+1)}$ based on \eqref{uptheta} and \eqref{PG}.
		\State \quad Update power normalization factor $t$ based on Proposition~\ref{prop_t}.
		\State \quad $ j=j+1 $.
		\State  \textbf{until} 	$ |\bar{I}_1^{(j)}+ \bar{I}_2^{(j)}-\bar{I}_1^{(j-1)} -\bar{I}_2^{(j-1)}| <\varepsilon $.
		\State  \textbf{output}  Optimal phase-shifting matrix $\mb{\Theta}$.
	\end{algorithmic}
\end{algorithm}
Based on the derived closed-form expressions of the information rate for two users, we can optimize the phase shifts and transmission and reflection coefficients of the STAR-RIS elements to maximize the sum rate of the two users. The optimization problem can be formulated as
\begin{subequations}
	\begin{align}
		\left ({\textbf {P1} }\right)&\max \limits _{   {{{\bf{\Theta}}}}    }~~	\bar{I}_1(\mb{\Theta})+ \bar{I}_2(\mb{\Theta})   \label{P1}  
		\\&  \mathrm {s.t.} \quad   \mb{\Theta}_{k,i}=\text{diag}(\sqrt{\beta_{k,i,1}}e^{\theta_{k,i,1}},...,\sqrt{\beta_{k,i,L_k}}e^{\theta_{k,i,L_k}}), \notag \\
		&\quad \quad \quad \quad \quad \quad \quad1\le k \le K,1\le i \le 2, \label{theta_st} \\
		& \quad  \beta_{k,1,l} + \beta_{k,2,l}= 1, 1\leq k\leq K, 1\leq l\leq L_k, \label{beta} 
	\end{align}	
\end{subequations}
where $\mb{\Theta}\in \{\mb{\Theta}_{1,1},...,\mb{\Theta}_{K,2}\}$. 
It is noted that due to the constant-modulus constraint
of phase shifts of the STAR-RIS,  Problem (P1) is non-convex. To address this issue, we propose a PGAM algorithm to solve Problem (P1). Specifically, by applying gradient ascent method with the constraints for the phase shifts and transmission and reflection coefficients of the STAR-RIS taken into account,
the sum rate increases monotonically iteratively.

Defining $ {\boldsymbol{\theta }}_{i}={\mb{\Theta}}_{i}\mb{1} $ for $1\leq i\leq 2$, where $ \mb{1} \in \mbb{C}^{L \times 1} $ is an all-one vector and $ {\mb{\Theta}}_i=\text{diag}\{\mb{\Theta}_{1,i},\mb{\Theta}_{2,i},...,\mb{\Theta}_{K,i} \} $. Then at the $(j+1)$-th iteration in the PGAM algorithm, the elements in the STAR-RIS are updated by 
\begin{align}
	\boldsymbol{\theta}^{(j+1)}_i(l) =  \text{P}_\text{G} \left( {\boldsymbol{\theta}^{(j)}_i(l)+ \boldsymbol{\alpha}^{(j)}_i(l)\boldsymbol{\delta}^{(j)}_i(l)}\right), \label{uptheta}
\end{align}
where $ \boldsymbol{\alpha}_i^{(j)} $  is the step size at the $ j $-th iteration and  $ \boldsymbol{\delta}^{(j)}_i $ is the gradient of $ \bar{I}_1+ \bar{I}_2 $ at the $ j $-th iteration, which is given by \eqref{vbz_deri} in Appendix~\ref{appx_gradient}.    The notation $\mb{a}(l)$ means the $l$-th element of the vector $\mb{a}$ and the operator $\text{P}_\text{G} \left( \boldsymbol{\theta}_i(l)\right)$ guarantees the constraints for the phase shifts and transmission and reflection coefficients of the STAR-RIS, which
is defined as
\begin{align}
\text{P}_\text{G} \left( \boldsymbol{\theta}_i(l)\right)=\frac{ \boldsymbol{\theta}_i(l)}{\sqrt{ |\boldsymbol{\theta}_2(l)|^2+|\boldsymbol{\theta}_1(l)|^2}}, \label{PG}
\end{align}
where the operator $|\boldsymbol{\theta}_i(l) |$ denotes the module of $\boldsymbol{\theta}_i(l)$.
Therefore, based on \eqref{uptheta} and \eqref{PG}, we proposed a PGAM optimization algorithm to solve Problem (P1), which is summarized in Algorithm 1.

	{ \color{black} For the complexity analysis, it can be  noted that the complexity of the proposed PGAM algorithm  mainly comes the calculation of the matrices in (75)-(82). Assuming that the  number of iterations of the fixed point equation in (75)-(82) is $ I_{in} $ and the   number of iterations for  the PGAM is  $ I_{out} $, the complexity of the PGAM algorithm is $O\left(I_{out}I_{in}\left((R_1+L)^3+(R_2+L)^3+T^3\right)\right)$ }

\section{Numerical Results}
\begin{table}[b] 
	\centering
	\caption{ Parameter settings of system configurations}
	\normalsize  
	\begin{tabular}{cccccc}
		\toprule
		& $ R_1 $ &  $ R_2 $ &  $ T $ &  $ \rho_1 $  &  $ \rho_2 $\\
		\midrule
		Case 1 &  16 & 16 &  10&  5&  1\\
		Case 2   & 10   & 10 &  16&  5&  1\\
		\bottomrule
	\end{tabular}
\end{table}
In this section, the numerical simulations are conducted to verify the accuracy of the derived asymptotic expression of the information rates and the effectiveness of the proposed PGAM algorithm. In the simulation,  for the Rician fading settings, the deterministic propagation components are modeled as the links between two UPAs equipped at both the transmitter and the receiver \cite{OESP}. If not specified, the phase shifts of RIS elements are randomly generated between 0 and $ 2\pi $ and the transmission and reflection coefficients are set to $\beta_{k,1,l}=\beta_{k,2,l}=0.5$. The number of elements in STAR-RIS is 30 and the number of the STAR-RIS panels is set to $K=2$.  In addition, we assume that the power allocation coefficients $\kappa_i$ for two users are set to $\kappa_1=0.1 $ and $\kappa_2=0.9$. Statistical parameters of the channel in the simulations are generated randomly but fixed in each Monte Carlo simulation, such as the deterministic unitary matrices  $\mb{U}_{k}$, $\mb{V}_{k}$, $\mb{T}_{k,i}$ and $\mb{S}_{k,i}$, as well as the variance matrices $\mb{M}_{k}$ and $\mb{N}_{k,i}$ in \eqref{eqFk}-\eqref{eqGk-2}. 

In addition, in the simulation, we consider the scenario where the attenuation of the direct link is significant due to the absorption of scatterers in the environment. Therefore, we set the channel gain of the direct links to be the same as that of each RIS-reflected link \cite{MoustakasTWC2023}, i.e.,
\begin{align}
\text{Tr}\left(\mb{R}_{0,i}\mb{R}_{0,i}^\dagger\right)=\text{Tr}\left(\left(\mb{G}_{k,i}\mb{\Theta}_{k,i}\mb{F}_k\right)\left(\mb{G}_{k,i}\mb{\Theta}_{k,i}\mb{F}_k\right)^\dagger\right), 
\end{align}  for $1\leq i\leq 2$ and $1\leq k\leq K$.

  \begin{figure}[b]
 	\centerline{\includegraphics[width=0.9\columnwidth]{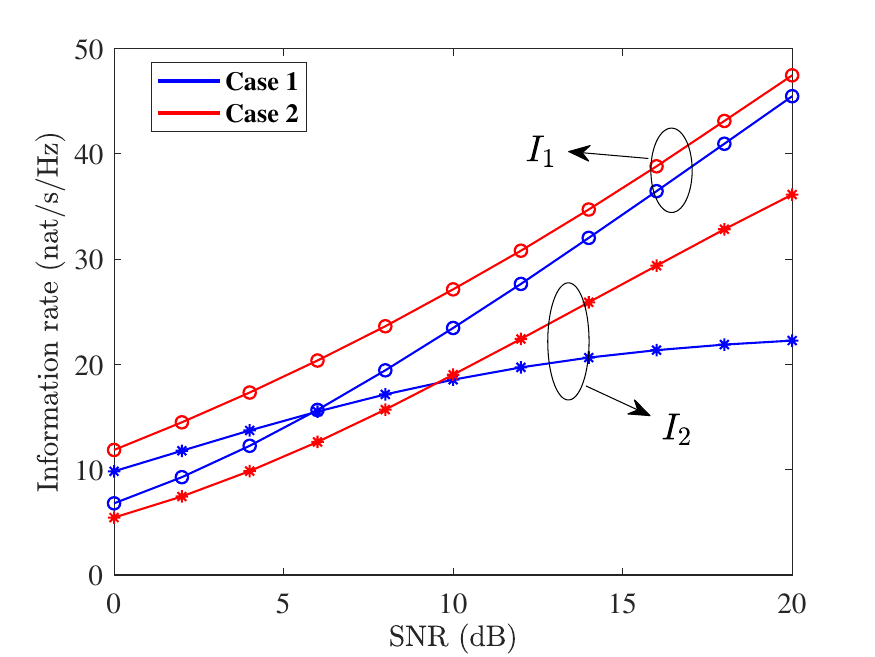}
 	}
 	\caption{{\color{black} Information rate under different system configurations. The solid lines represent the asymptotic results with Proposition~\ref{prop_cauchy1} and \eqref{R1_int} and \eqref{R2_int}, and the markers represent the Monte Carlo simulation results. The specific configuration of the two cases are shown in Table I.}} 
 	\label{sum1}
 \end{figure}
\begin{figure}[t]
	\centerline{\includegraphics[width=0.9\columnwidth]{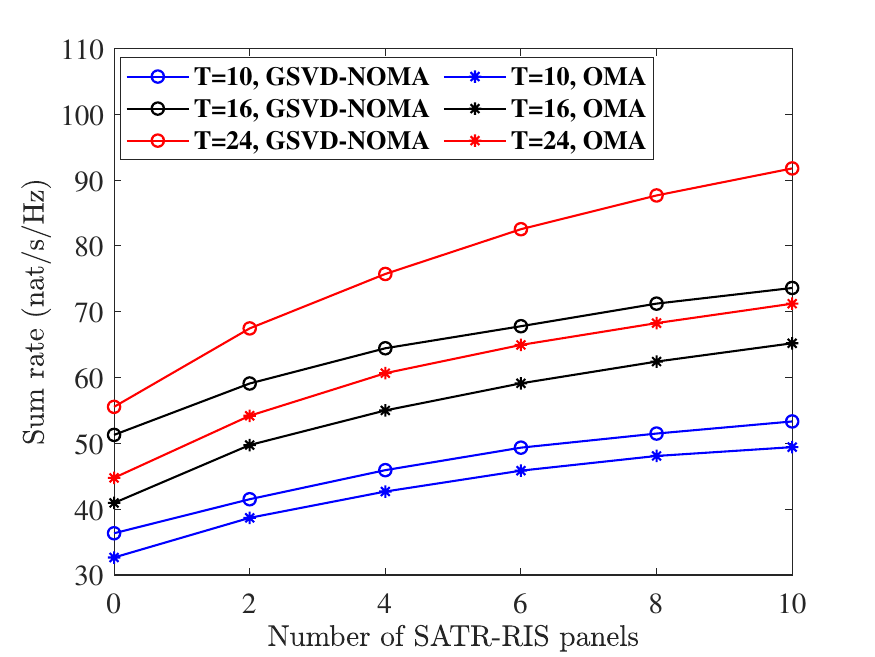}
	}
	\caption{{\color{black}Sum rate versus the number of STAR-RIS panels under different SNRs. The solid lines represent the asymptotic results with Proposition~\ref{prop_cauchy1} and \eqref{R1_int} and \eqref{R2_int}, and the markers represent the Monte Carlo simulation results.}} 
	\label{sumvsK}
\end{figure}
\begin{figure}[b]
	\centerline{\includegraphics[width=0.9\columnwidth]{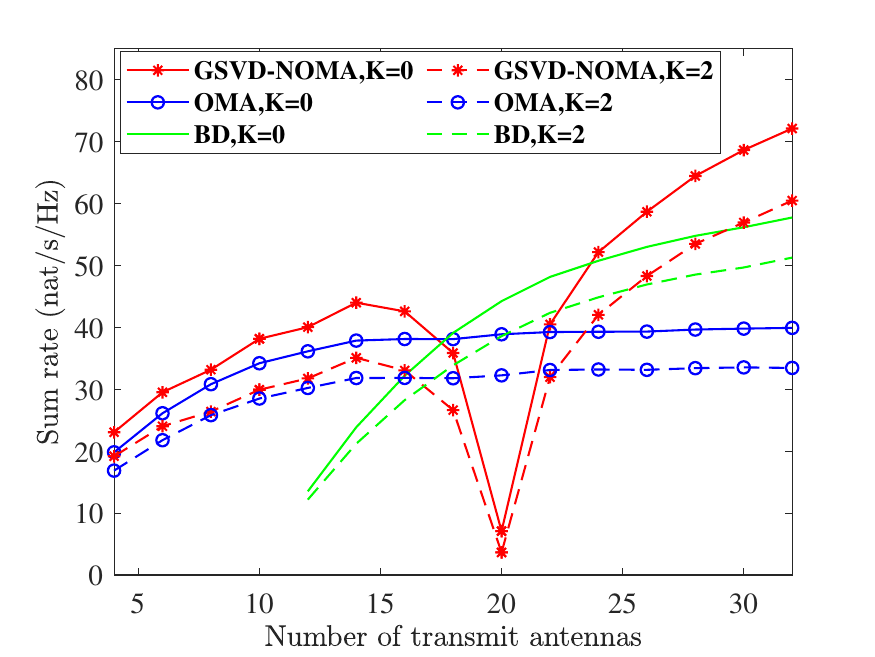}
	}
	\caption{{\color{black}Sum rate versus the number of transmit antennas under different transmission schemes. The solid lines and the markers in red and blue 
 represent the asymptotic results with Proposition~\ref{prop_cauchy1}, \eqref{R1_int} and \eqref{R2_int} and the Monte Carlo simulation results, respectively. The green lines represent the Monte Carlo simulation results for the BD scheme.}} 
	\label{sumvsT}
\end{figure}
\begin{figure}[t]
	\centerline{\includegraphics[width=0.9\columnwidth]{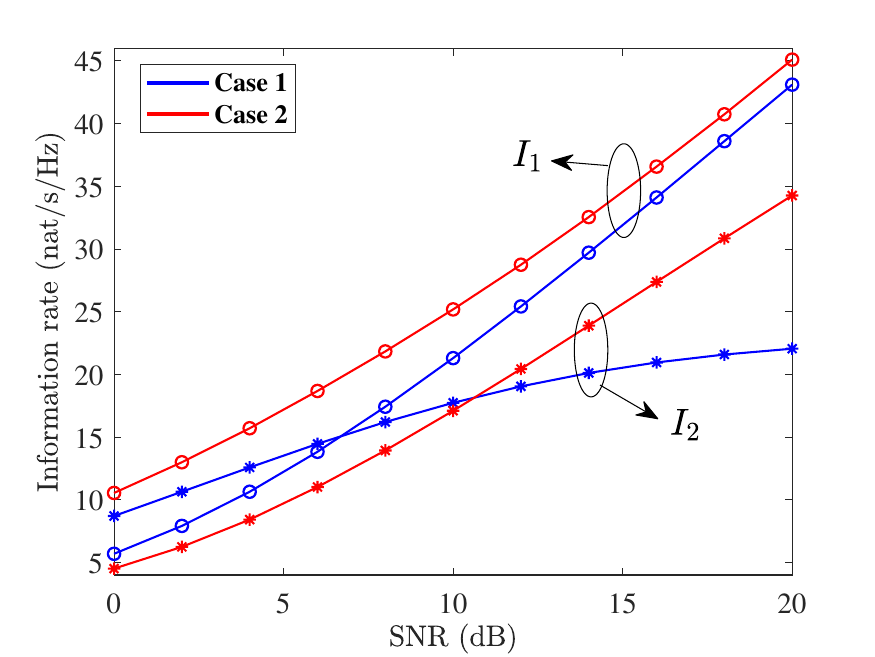}
	}
	\caption{{\color{black} Information rate versus the number of STAR-RIS panels under different SNRs. The solid lines represent the asymptotic results \eqref{R1} and \eqref{R2_simply}, and the markers represent the Monte Carlo simulation results. The specific configuration of the two cases are shown in Table I.}} 
	\label{simplify}
\end{figure}
\begin{figure}[b]
	\centerline{\includegraphics[width=0.9\columnwidth]{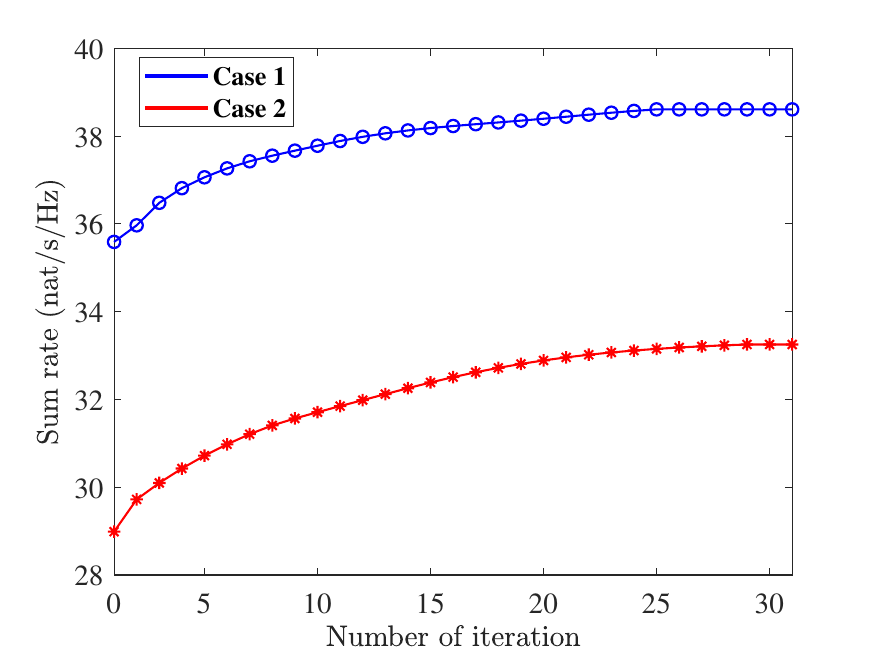}
	}
	\caption{{\color{black} Sum rate versus the number of iterations under different cases. The specific configuration of the two cases are shown in Table I. The solid lines represent the asymptotic results  \eqref{R1} and \eqref{R2_simply} and the markers represent the Monte Carlo simulation results.}} 
	\label{iter}
\end{figure}
\begin{figure}[t]
	\centerline{\includegraphics[width=0.9\columnwidth]{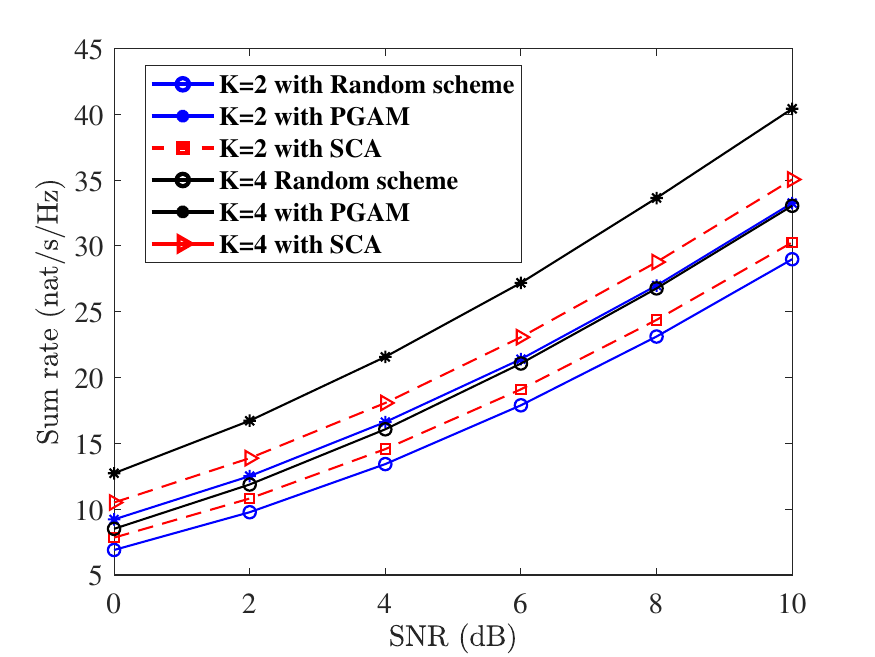}}
	\caption{{\color{black}Sum rate versus SNR under different  optimization algorithms. The solid lines and dashed lines represent the asymptotic results  \eqref{R1} and \eqref{R2_simply} and the markers represent the Monte Carlo simulation results.}}\label{op}
\end{figure}
To verify the accuracy of the derived asymptotic expressions, we plot  in Fig.~\ref{sum1} the information rates for two users versus signal-to-noise ratio (SNR) with different system configurations. 
As shown in Table I, we consider two cases with different system configurations to validate the derived results with Proposition~\ref{prop_cauchy1}. The SNR is defined as $\frac{1}{\sigma_0^2}$.
Each Monte Carlo simulation curve
in the figure is obtained by averaging over $ 10^6 $ channel realizations. In addition, for Case 2, the auxiliary variable $\Delta$ in \eqref{B} is set to $10^{-4}$ in the simulation. From Fig.~\ref{sum1}, we observe that the analytic results match the  Monte Carlo simulation results well, which shows the accuracy of the derived results of Proposition~\ref{prop_cauchy1}. In addition, in Fig.~\ref{sum1} of Case 1, as the SNR increases, the information rate of user 1 monotonically increases, but the information rate of user 2 tends to stabilize. This is due to the fact that with the SIC implemented at user 1, the interference of user 2 can be eliminated and hence the rate of user 1 continues to increase. However, the SIC is not deployed at user 2 and therefore, the information rate of user 2 is limited. For Case 2, the information rate for both users increases with the SNR. This is because the number of transmit antennas is larger than the number of receiving antennas in Case 2 and then there are orthogonal subchannels in the MIMO channel for transmission of user 2 in the GSVD-based precoding \cite{chenTWC2019,RAO2023}. Therefore, the information rate of user 2 is not limited by the interference of user 1 in Case 2.


To further demonstrate the merit of STAR-RIS in enhancing system performance, we depict 
Fig.~\ref{sumvsK} to show the relationship between the number of STAR-RIS and the sum rate $I_1+I_2$. The SNR is set to 20 dB and the number of receiving antennas is 16. In addition, we plot the sum rate under the OMA scheme to show the superiority of the GSVD-based NOMA scheme, in which {we consider the time division multiple access (TDMA) protocol and the transmission scheme in \cite{WangTWC2024}}. 
It can be observed from Fig.~\ref{sumvsK} that the analytic results in Proposition~\ref{prop_cauchy1} perfectly match the Monte Carlo simulation results for any number of STAR-RIS (including the no-RIS case, i.e., $K=0$), which further shows the generalization of our derived results. In addition, from Fig.~\ref{sumvsK}, {we can find that the sum rate increases with the number of STAR-RISs, but the growth of the sum rate slows down. This is due to the fact when the number of STAR-RIS is small, 
	the reflected links are sparse, then the increment of the number of STAR-RISs riches the paths of the channel, thereby  improving the sum rate effectively. However, excessive
 STAR-RIS panels  may cause path overlap, which cannot provide much performance gain.} Furthermore, compared to OMA scheme, the GSVD-based NOMA scheme can obtain a higher sum rate.

We plot Fig.~\ref{sumvsT} to further analyze the relationship between the sum rate of the GSVD precoding and the number of transmit antennas. In addition, the sum rate performance of OMA transmission scheme and the 
block diagonalization (BD) precoding scheme \cite{SpencerTSP2004} are included in Fig.~\ref{sumvsT}. The number of receiving antennas is set to 10 and the SNR is set to 10 dB. From Fig.~\ref{sumvsT}, we can observe that the sum rate of GSVD precoding  degrades when the number of transmit antennas approaches the sum of the number of receiving antennas $R_1+R_2$. This is because the power normalization factor of GSVD precoding increases significantly  with the number of transmit antennas when $ T \leq R_1+R_2$, which greatly decreases the power of the received signals thereby degrading the sum rate.  While when $ T > R_1+R_2$, the power normalization factor decreases with the increase of $ T $. At the same time, the increase in the number of transmit antennas leads to more spatial gain, which results in an increase in the sum rate. Therefore, we can conclude that when $ T \approx R_1+R_2 $, we should adopt other NOMA schemes or OMA schemes (e.g., the BD scheme or TDMA scheme)  to enhance the sum rate.

To demonstrate the accuracy of the derived closed-form expressions for information rate \eqref{R1} and \eqref{R2_simply}, we plot the information rate versus the SNR under different cases in Fig.~\ref{simplify}. To apply Proposition~\ref{prop_shano}, we only consider the fixed components in $\mb{F}_k$, i.e., $\widetilde{\mb{F}}_k=\mb{0}$. From Fig.~\ref{simplify}, it is observed that the curves of the asymptotic results follow the
Monte Carlo simulation accurately, which confirms the correctness of the derived closed-form expressions. In addition, the trends of the curves in Fig.~\ref{simplify} for $I_1$ and $I_2$ are the same as Fig.~\ref{sum1} since there are few changes in channel structure compared to Fig.~\ref{sum1}. 


To demonstrate the effectiveness of the proposed PGAM algorithm, we first depict the convergence of the PGAM algorithm (Algorithm 3) under different system cases in Fig.~\ref{iter}. The number of STAR-RIS panels is 2 and the number of elements for a STAR-RIS panel is 30. The SNR is fixed at 10 dB and the convergence criterion $ \varepsilon $ is set to $10^{-4}$. For both cases, the sum rate increases with the number of iterations and can achieve the maximum within 25 iterations. In addition, in Fig.~\ref{iter}, the asymptotic results  remain consistent with the simulation at any iteration, showing the compatibility of the closed-form expressions. Furthermore, the 0-th iteration in Fig.~\ref{iter} implies the initial value, which is much lower than the converged sum rate, thereby indicating the effectiveness of the proposed PGAM optimization algorithm. {\color{black} It should be noted that the converged performance of our proposed PGAM algorithm does not depend on the initialization.}

Finally, we plot Fig.~\ref{op} to demonstrate the superiority of the proposed optimization
algorithm with different number of STAR-RIS panels under Case 2. The number of RIS
panels is $K=2$ and $K=4$.   {\color{black} The random scheme is the case in which the phase shifts of the STAR-RISs are randomly set and the transmission and reflection coefficients are set to $\beta_1=\beta_2=0.5$. In addition, we compare the proposed PGAM algorithm with the successive convex approximation (SCA)-based algorithm \cite{Razaviyayn}, in which the results obtained by each iteration are projected into the appropriate space to satisfy the constraints of the STAR-RIS.  It can be clearly observed that the proposed optimization algorithm can effectively improve the sum rate in both cases. Furthermore, compared to the SCA-based algorithm, our proposed PGAM algorithm shows better performance, this is because the SCA is applied to solve
	the optimization problem without the constraints at each iteration 
	and  then projected the optimized phase shifts phase back to the appropriate space for the SATR-RIS. As a result, the iterations before and after cannot be matched directly. Therefore, the result of each iteration in the SCA process does not fully align with the previous iteration, which leads to the performance loss.}  Meanwhile, we can observe
that increasing the number of STAR-RIS panels increases the
optimization gain of the PGAM algorithm  when the number of system antennas is fixed. This is because that more STAR-RIS panels can provide a greater optimization space, and thus the performance gain provided by the PGAM algorithm becomes higher.

%

\section{Conclusion}
In this paper, we focus on the analysis of the information rate of the
MIMO-NOMA communication systems assisted by STAR-RIS, in which the GSVD precoding is applied at the BS. In addition, a general Rician channel model with Weichselberger's correlation structure is adopted to capture the correlation between the channel coefficients.  
Based on the operator-valued free probability theory, we obtain the Cauchy transform of the MIMO-NOMA channel for different antenna configurations and the closed-form expression for the power normalization factor. Then, by simplifying the channel, we 
derive the closed-form expression of the asymptotic information  rate for two users and propose a PGAM algorithm to enhance the sum rate by optimizing the STAR-RIS. 
 Additionally, the conducted numerical results reveal the
 accuracy and  the broadness  of the derived  results. Furthermore, the proposed optimization algorithm can further enhance the sum rate of the NOMA system. In addition, compared with the OMA scheme, the GSVD-based NOMA transmission scheme can obtain a higher sum rate when there is a gap between the  number of transmit antennas and the sum of the number of the receiving antennas.
\ifCLASSOPTIONcaptionsoff
  \newpage
\fi

	 	 
\begin{appendices}
		\section{ The Matrix-Valued Equations in Proposition~\ref{prop_cauchy1} }\label{prop1}
 {\color{black}
 	The matrices $\mb{A}_2$-$\mb{A}_5$ in Proposition~\ref{prop_cauchy1} are give by
		\begin{align}
			&\mb{A}_2	= \left(-\mho -\overline{\mb{F}}_2^\dagger\mb{\Pi}^{-1}\overline{\mb{F}}_2 \right)^{-1}, \\
			&\mb{A}_3=\left(-\widetilde{\mb{\Psi}}_{2,2}+\overline{\mb{F}}_2\mb{\mho}^{-1}\overline{\mb{F}}_2^\dagger-\mb{\Omega}^{-1}  \right)^{-1}\mb{\Omega}^{-1},\\
			&\mb{A}_4=\mb{\Omega}^{-1}\left(-\widetilde{\mb{\Psi}}_{2,2}+\overline{\mb{F}}_2\mb{\mho}^{-1}\overline{\mb{F}}_2^\dagger\right)^{-1}, \\
			&\mb{A}_5=\left(-\widetilde{\mb{\Psi}}_{2,2}+\overline{\mb{F}}_2\mb{\mho}^{-1}\overline{\mb{F}}_2^\dagger-\mb{\Omega}^{-1}  \right)^{-1}.
		\end{align}
	
	In addition, the matrix functions $ \widetilde{\mb{\Psi}} $, $	{\mb{\Omega}}$, $	{\mb{\Pi}}$, $	\mb{\Sigma}$, and $	\mb{\Psi}_{i,j}$, $	\widetilde{\mb{\Psi}}_{i,j}$, $	\widetilde{\mb{\Phi}}_{i}$ and $	\mb{\Phi}_{i}$ are respectively denoted as
	\begin{align}
		&	\widetilde{\mb{\Psi}} =z\mb{I}-\mb{\Phi}_{1}, \label{psi} \quad
		\mb{\Omega}= -	\widetilde{\mb{\Phi}}_2-\overline{\mb{G}}_2^\dagger\left(-{\Phi}_{2}-\mb{I}\right)^{-1}\overline{\mb{G}}_2,\\
		&	{\mb{\Pi}} = 	\widetilde{\mb{\Psi}}_{2,2}-	 \mb{\Omega}^{-1}, \quad
		{\mb{\Sigma}}= -\widetilde{\mb{\Phi}}_{1}-\overline{\mb{G}}_1^\dagger\widetilde{\mb{\Psi}}^{-1}\overline{\mb{G}}_1,\label{eqPsi2}\\
		&	\mb{O} = 	-\widetilde{\mb{\Psi}}_{1,1}-{\mb{\Sigma}}^{-1}, \mho =\mb{\Psi}_{1,1}+\mb{\Psi}_{2,2}-\mb{\Psi}_{1,2}-\mb{\Psi}_{2,1} ,\label{eqPsit2} \\
		&		\mb{\Phi}_1 = \sum_{k=1}^K \widetilde{\eta}_{k,1}(\mc{G}_{7}(z)), \quad \mb{\Phi}_2 = \sum_{k=1}^K \widetilde{\eta}_{k,2}(\mc{G}_{4}(z)), \\
		&	\widetilde{	\mb{\Phi}}_1= \mathrm{diag}\left\{ \mb{0},\  -\eta_{1,1}(\mc{G}_{1}(z)),\  \ldots,\  -\eta_{K,1}(\mc{G}_{1}(z))\right\},\\
		&		\widetilde{	\mb{\Phi}}_2= \mathrm{diag}\left\{ \mb{0},\  -\eta_{1,2}(\mc{G}_{5}(z)),\  \ldots,\  -\eta_{K,2}(\mc{G}_{5}(z))\right\},\\
		&		\widetilde{\mb{\Psi}}_{i,j} = \mathrm{diag}\left\{  \widetilde{\zeta}_{0,i}(\mc{G}_{3}(z)),  \widetilde{\zeta}_1(\mc{G}_{3}(z)), \ldots,    \widetilde{\zeta}_K(\mc{G}_{3}(z) \right\},\label{eqPsit}\\
		&				{\mb{\Psi}}_{1,1} ={\zeta}_{0,1}(\mc{G}_{2}(z))+ \sum_{k=1}^{K}{\zeta}_K(\mc{G}_{2}(z)),\\	
		&				{\mb{\Psi}}_{2,2} ={\zeta}_{0,1}(\mc{G}_{6}(z))+ \sum_{k=1}^{K}{\zeta}_K(\mc{G}_{6}(z)),	\\
		&{\mb{\Psi}}_{1,2} = \sum_{k=1}^{K}{\zeta}_k(\mc{G}_{9}(z)),	{\mb{\Psi}}_{1,2} = \sum_{k=1}^{K}{\zeta}_k(\mc{G}_{8}(z)),  \label{psi12}
	\end{align}	
	where the matrices $ \widetilde{\eta}_{k,i} $, $ \eta_{k,i} $, $ \widetilde{\zeta}_{k,i} $ and $ \zeta_{k,i} $ are the  parameterized one-sided correlation matrices of the channel as  defined in \eqref{yta}-\eqref{zeta_qta_k-n}.}
	\section{Proof of Proposition~\ref{prop_cauchy1}}\label{appx_prop_cauchy}
{\color{black}	According to \eqref{eqGB_GL}, we need to calculate    the operator-valued Cauchy transform    $ \mc{G}_{\mb{L}}^{\mc{D}}(\mb{\Lambda}(z)) $     to obtain the Cauchy transform  $ \mc{G}_{\mb{B}}(\mb{\Lambda}(z)) $. 	To explicitly calculate   $ \mc{G}_{\mb{L}}^{\mc{D}}(\mb{\Lambda}(z)) $, the linearization matrix $ \mb{L} $ is further expressed as 
	\begin{align}
		\mb{L} = \overline{\mb{L}} + \widetilde{\mb{L}},
	\end{align}
	where $\overline{\mb{L}}$ and $\widetilde{\mb{L}}$ contain the deterministic and the random parts of $\mb{L}$, respectively, and are given as follows:
	\begin{align}
		\setlength{\arraycolsep}{1.8pt}
		\overline{\mb{L}}=\left[ {\begin{array}{*{20}{c}}
				\mb{0}&	\mb{0}&	\mb{0}&	\mb{0}&	\mb{0}&	\mb{0}&	\overline{\mb{G}}_1\\
				\mb{0}&	\mb{0}&	\overline{\mb{F}}_1&	\mb{0}&	\mb{0}&	\mb{0}&-\mb{I}_{R_1+L}\\
				\mb{0}&	\overline{\mb{F}}_1^\dagger&	\mb{0}&	\mb{0}& \mb{0}&-\overline{\mb{F}}_2^\dagger&\mb{0}\\
				\mb{0}&	\mb{0}&	\mb{0}&	\mb{0}&-\overline{\mb{ G}}_2^\dagger&\mb{I}_{R_2+L}&	\mb{0}\\
				\mb{0}&	\mb{0}&	\mb{0}&-\overline{\mb{G}}_2&\mb{I}_{R_2}&	\mb{0}&	\mb{0}\\
				\mb{0}&	\mb{0}&-\overline{\mb{F}}_2&	\mb{I}_{R_2+L}&\mb{0}&\mb{0}&	\mb{0}\\
				\overline{\mb{G}}_1^\dagger&	-\mb{I}_{R_1+L}&	\mb{0}&	\mb{0}&\mb{0}&\mb{0}&	\mb{0}\\
		\end{array}} \right],
	\end{align}
	\begin{align}
		\setlength{\arraycolsep}{1.8pt}
		\widetilde{\mb{L}}=\left[ {\begin{array}{*{20}{c}}
				\mb{0}&	\mb{0}&	\mb{0}&	\mb{0}&	\mb{0}&	\mb{0}&	\widetilde{\mb{G}}_1\\
				\mb{0}&	\mb{0}&	\widetilde{\mb{F}}_1&	\mb{0}&	\mb{0}&	\mb{0}&\mb{0}\\
				\mb{0}&	\widetilde{\mb{F}}_1^\dagger&	\mb{0}&	\mb{0}& \mb{0}&-\widetilde{\mb{F}}_2^\dagger&\mb{0}\\
				\mb{0}&	\mb{0}&	\mb{0}&	\mb{0}&-\widetilde{\mb{G}}_2^\dagger&\mb{0}&	\mb{0}\\
				\mb{0}&	\mb{0}&	\mb{0}&-\mb{G}_2&\mb{0}&	\mb{0}&	\mb{0}\\
				\mb{0}&	\mb{0}&-\widetilde{\mb{F}}_2&	\mb{0}&\mb{0}&\mb{0}&	\mb{0}\\
				\widetilde{\mb{G}}_1^\dagger&	\mb{0}&	\mb{0}&	\mb{0}&\mb{0}&\mb{0}&	\mb{0}\\
		\end{array}} \right],
	\end{align}
	where $ \overline{\mb{G}}_i $,$ \overline{\mb{F}}_i $, $ \widetilde{\mb{G}}_i $ and $ \widetilde{\mb{F}}_i $ are defined as 
	\begin{align}
		\overline{\mb{G}}_i&=[\mb{I}_R, \overline{\mb{R}}_{1,i}\mb{\Theta}_{1,i},...,\overline{\mb{G}}_{K,i}\mb{\Theta}_{K,i}],\\
		\overline{\mb{F}}_i&=[\overline{\mb{F}}_{0,i}^\dagger, \overline{\mb{F}}_{1}^\dagger,...,\overline{\mb{F}}_{K}^\dagger]^\dagger,\\
		\widetilde{\mb{G}}_i&=[\mb{0}_R, \widetilde{\mb{R}}_{1,i}\mb{\Theta}_{1,i},...,\widetilde{\mb{G}}_{K,i}\mb{\Theta}_{K,i}],\\
		\widetilde{\mb{F}}_i&=[\widetilde{\mb{F}}_{0,i}^\dagger, \widetilde{\mb{F}}_{1}^\dagger,...,\widetilde{\mb{F}}_{K}^\dagger]^\dagger.
	\end{align}
	
	Then, with the similar steps as in \cite{ZhengTCom2023}, we can obtain that 
	$\overline{\mb{L}}$ is an operator-valued semicircular variable and is free from the deterministic matrix $\widetilde{\mb{L}}$. Therefore,  the operator-valued Cauchy transform    $ \mc{G}_{\mb{L}}^{\mc{D}}(\mb{\Lambda}(z)) $ can be further expressed with the subordination formula
	\cite{Belinschi} as 
	\begin{align}
		\mc{G}_{\mb{L}}^{\mc{D}}(\mb{\Lambda}(z)) &= \mc{G}_{\overline{\mb{L}}}^{\mc{D}}\left(\mb{\Lambda}(z) - \mc{R}_{\widetilde{\mb{L}}}^{\mc{D}}\left(\mc{G}_{\mb{L}}^{\mc{D}}(\mb{\Lambda}(z))\right)\right)\nonumber\\
		&= \mbb{E}_{\mc{D}}\left[\left(\mb{\Lambda}(z) - \mc{R}_{\widetilde{\mb{L}}}^{\mc{D}}\left(\mc{G}_{\mb{L}}^{\mc{D}}(\mb{\Lambda}(z))\right) - \overline{\mb{L}}\right)^{-1}\right],\label{eqGL_sub}
	\end{align}
	where $\mc{R}_{\widetilde{\mb{L}}}^{\mc{D}}\left(\cdot\right)$ denotes the operator-valued $R$-transform of $\mb{L}$ over $\mc{D}$. Then we need to calculate the operator-valued $R$-transform 	$ \mc{R}_{\widetilde{\mb{L}}}^{\mc{D}}\left(\mc{G}_{\mb{L}}^{\mc{D}}(\mb{\Lambda}(z))\right) $. 
	According to \cite{ZhengTCom2023}, given a matrix $\mb{K}\in    \mbb{C}^{n \times n}$,  the operator-valued $R$-transform 	$ \mc{R}_{\widetilde{\mb{L}}}^{\mc{D}}\left(\mc{G}_{\mb{L}}^{\mc{D}}(\mb{\Lambda}(z))\right) $ can be expressed as 
	\begin{align}
		\mc{R}_{\widetilde{\mb{L}}}^{\mc{D}}(\mb{K}) &= \mbb{E}_{\mc{D}}\left[\widetilde{\mb{L}}\mb{K}\widetilde{\mb{L}}\right]\nonumber\\
		&=\left[\begin{array}{c:c:c:c:c:c:c}
			\mb{\Phi}_1 & & & & & &\\\hdashline
			&-\widetilde{\bm{\Psi}}_{1,1} & & & &-\widetilde{\bm{\Psi}}_{1,2}& \\\hdashline
			&&{\mb{\mho}} & & &&\\ \hdashline
			& & &  \widetilde{\bm{\Phi}}_2& && \\ \hdashline 
			& &  & &	 {\bm{\Phi}}_1 &&\\ \hdashline
			& -\widetilde{\bm{\Psi}}_{2,1}& & & & -\widetilde{\bm{\Psi}}_{2,2} &\\ \hdashline
			& & & & && \mb{\Phi}_2
		\end{array}\right],\label{eqEDX1}
	\end{align}
	where the matrix functions $ \widetilde{\mb{\Psi}} $, $	\mb{\Psi}_{i,j}$, $	\widetilde{\mb{\Psi}}_{i,j}$, $	\widetilde{\mb{\Phi}}_{i}$ and $	\mb{\Phi}_{i}$ are shown as \eqref{psi}-\eqref{psi12} in Appendix~\ref{prop1} and the matrix  $\mb{K}$ has the same structure as $ \mbb{E}_\mc{D}\left[\mb{X}\right] $	in \eqref{eqEDX}.
		
	Therefore, by replacing $ \mb{K} $ in \eqref{eqEDX1} with $\mc{G}_{\mb{L}}^{\mc{D}}(\mb{\Lambda}(z)) $, and substituting
	$ \overline{\mb{L}} $ and $ \mc{R}_{\widetilde{\mb{L}}}^{\mc{D}}\left(\mc{G}_{\mb{L}}^{\mc{D}}(\mb{\Lambda}(z))\right) $ into \eqref{eqGL_sub}, respectively, we obtain
	$ \mc{G}_{\mb{L}}^{\mc{D}}(\mb{\Lambda}(z)) $ as
				\setlength{\arraycolsep}{1.8pt}
	\begin{align}
		&\mc{G}_{\mb{L}}^{\mc{D}}(\mb{\Lambda}(z)) \\
		&= \left[\begin{array}{c:c:c:c:c:c:c}
			z\mb{I}-\mb{\Phi}_1 & & & & & &-\overline{\mb{G}}_1\\\hdashline
			&-\widetilde{\bm{\Psi}}_{1,1} &-\overline{\mb{F}}_1 & & &-\widetilde{\bm{\Psi}}_{1,2}&\mb{I} \\\hdashline
			& -\overline{\mb{F}}_1^\dagger&{\mb{\mho}} & & &-\overline{\mb{F}}_2^\dagger&\\ \hdashline
			& & &  \widetilde{\bm{\Phi}}_2&\overline{\mb{G}}_2^\dagger &-\mb{I}& \\ \hdashline 
			& &  &\overline{\mb{G}}_2 &	 -{\bm{\Phi}}_1-\mb{I} &&\\ \hdashline
			& \widetilde{\bm{\Psi}}_{2,1}&\overline{\mb{F}}_2 &-\mb{I} & & -\widetilde{\bm{\Psi}}_{2,2} &\\ \hdashline
			-\overline{\mb{G}}_1^\dagger&\mb{I} & & & && -\mb{\Phi}_2
		\end{array}\right]^{-1}.\label{eqEG}
	\end{align}
	
	Then by continuously applying the block matrix inversion identity   to \eqref{eqEG}, we can obtain the expression of the  operator-valued Cauchy transform    $ \mc{G}_{\mb{L}}^{\mc{D}}(\mb{\Lambda}(z)) $ and then conclude the Cauchy transform  $ \mc{G}_{\mb{B}}(\mb{\Lambda}(z)) $ with $ 	\mc{G}_\mb{B}(z) = \frac{1}{R_1}\mathrm{Tr}\left(\left\{\mc{G}_{\mb{L}}^{\mc{D}}(\mb{\Lambda}(z))\right\}^{(1,1)}\right) $, which are finally summarized in Proposition 1.
	\section{Proof of Proposition~\ref{prop_t}}\label{appx_prop_t}
	The  power normalization factor for the GSVD precoder can be further reformulated as 
	\begin{align}
		t\mathop  = 
		\limits^{z \to 0}-\text{Tr}\left(z\mb{I}-\mb{H}_1^\dagger\mb{H}_1-\mb{H}_2^\dagger\mb{H}_2\right),
	\end{align}
which is equal to the 	Cauchy transform of $\mb{B}_\textup{t}=\mb{H}_1^\dagger\mb{H}_1+\mb{H}_2^\dagger\mb{H}_2$. Therefore,  we aim to obtain the	Cauchy transform of $ \mb{B}_\textup{t} $.

Define ${\mb{G}}=\textup{diag}\{{\mb{G}}_1,{\mb{G}}_2\}$ and ${\mb{F}}=[{\mb{F}}_{1}^\dagger,{\mb{F}}_{2}^\dagger]^\dagger$, the matrix $ \mb{B}_\textup{t} $ can be expressed as $ \mb{B}_\textup{t}={\mb{F}}^\dagger{\mb{G}}^\dagger{\mb{G}}{\mb{F}} $. Then following the similar method of Section III-A, we first construct the linearized matrix $\mb{L}_{\textup{t}}$ as
\begin{align}
	\mb{L}_{\textup{t}}  = \left[\begin{array}{cccc}
		\mb{0} & \mb{0} & \mb{0} & \mb{F}^\dagger\\
		\mb{0} & \mb{0} & \mb{G}^\dagger& -\mb{I}\\
		\mb{0} & \mb{G} & -\mb{I}& \mb{0}\\
	\mb{F} 	 & -\mb{I} & \mb{0} & \mb{0}
	\end{array}\right].\label{eqBL}
\end{align}
	Then substituting $ \bf{G} $ and $  \bf{F}  $  into \cite[Prop. 2]{ZhengTCom2023} and following the same steps,  we can obtain the Cauchy transform of $ \mb{B}_\textup{t} $. The detailed proof  is an application of Proposition 2 in \cite[Prop. 2]{ZhengTCom2023} and therefore, is omitted here. Finally, with Cauchy transform of $ \mb{B}_\textup{t} $ and $ z \to 0 $, we can obtain Proposition~\ref{prop_t}.}

	\section{Derivation of $\overline{I}_1$ and $\overline{I}_2$}\label{appx_shano}

	Based on \eqref{R1_int} and \eqref{R2_int}, by differentiating $I_1$ and $I_2$ with respect to $ \alpha_1 $ and $\alpha_2$, we have 
	\begin{align}
		\frac{\partial I_1}{\partial \alpha_1}&= \frac{S}{1+\alpha_1} + \frac{S}{(1+\alpha_1)^{2}}\mc{G}_\mu(-(1+\alpha_1)^{-1}), \\
		\frac{\partial I_1}{\partial \alpha_2}&= -S\mc{G}_\mu(-(1+\alpha_2))+S\mc{G}_\mu(-(1+\kappa_1\alpha_2)),
	\end{align}
where $ \alpha_1 $ and $\alpha_2$ are defined in Proposition~\ref{prop_shano}.	Then to prove the correctness of Proposition~\ref{prop_shano},  we only need to prove that $	\frac{\partial I_1}{\partial \alpha_1}=	\frac{\partial \overline{I}_1}{\partial \alpha_1}$ and $	\frac{\partial \overline{I}_2}{\partial \alpha_2}=	\frac{\partial \overline{I}_2}{\partial \alpha_2}$ hold with the given expression in Proposition~\ref{prop_shano} and as $\sigma_0^2$ approaches infinity, $ \overline{I}_1 $ and $ \overline{I}_2 $ are equal to 0.  

 Therefore,  the partial derivative of $ \overline{I}_1 $ with respect to $ \alpha_1 $ can be expressed as 
	\begin{align}
	&		\frac{\partial\overline{I}_1(\alpha_1)	}{\partial\alpha_1}=\frac{\partial\phi(\frac{-1}{1+\alpha_1})}{\partial\alpha_1} + \frac{R_1}{1+\alpha_1} \notag \\
&	=\!\frac{\partial\! \log\det\left({\widetilde{\mb{\Psi}}(\lambda_1 )}\right)	}{\partial\alpha_1}\!+\!\!  \frac{\partial\! \log\det\left({{\mb{\Sigma}}(\lambda_1)}\right)	}{\partial\alpha_1}\!+\!\! \frac{\partial\! \log\det\left(-\mb{\Phi}_{2}\!-\!\mb{I}\right)	}{\partial\alpha_1}
 \notag \\
	&+\frac{\partial \log\det \left(	\mb{\Omega} \right)	}{\partial\alpha_1}	+ \frac{\partial \log\det\left({\mb{\Pi}}\right)	}{\partial\alpha_1}	+  \frac{\partial \log\det\left(	\mc{G}_3(\lambda_1)^{-1} \right)	}{\partial\alpha_1}
 \notag \\
	&+\frac{\partial 	\log\det\left({\mb{O}}\right)	}{\partial\alpha_1}+ \frac{\partial \textup{Tr}\left({\mb{\Phi}}_{1} \mc{G}_{1}(\lambda_1)\right)	}{\partial\alpha_1}  + \frac{\partial \textup{Tr}\left({\mb{\Phi}}_{2} \mc{G}_{5}(\lambda_1)\right)	}{\partial\alpha_1} \notag \\
	& +\frac{\partial  \textup{Tr}\left(\widetilde{{\mb{\Psi}}}_{1,1} \mc{G}_{2}(\lambda_1)\right)	}{\partial\alpha_1}  	+\frac{\partial  \textup{Tr}\left(\widetilde{{\mb{\Psi}}}_{2,2} \mc{G}_{6}(\lambda_1)\right)	}{\partial\alpha_1} + \frac{R_1}{1+\alpha_1},
\end{align}
where $\lambda_1=\frac{-1}{1+\alpha_1}$.
 For a matrix function $\mb{K}(z)$, we have
\begin{align}
	\frac{\partial}{\partial z} \log \det \mb{K}(z)&= \mathrm{Tr} \left(\mb{K}(z)^{-1}\frac{\partial \mb{K}(z)}{\partial z} \right), \label{maxtr_derive}
	\\	\mathrm{Tr}\left(\frac{\partial \mb{K}(z)^{-1}}{\partial z}    \right)	&= -\mathrm{Tr} \left(\mb{K}(z)^{-1}\frac{\partial \mb{K}(z)}{dz}\mb{K}(z)^{-1} \right). \label{maxtr_derive2}
\end{align}
Then we can obtain 
\begin{align}
&\frac{\partial \log\det\left({\widetilde{\mb{\Psi}}(\lambda_1 )}\right)	}{\partial\alpha_1}=\text{Tr}\left(\widetilde{\mb{\Psi}}(\lambda_1 )^{-1}\frac{\partial {\widetilde{\mb{\Psi}}(\lambda_1 )}	}{\partial\alpha_1} \right), \label{Psi_t}\\
&\frac{\partial \log\det\left({{\mb{\Sigma}}(\lambda_1 )}\right)	}{\partial\alpha_1}=\text{Tr}\left({\mb{\Sigma}}(\lambda_1 )^{-1}\frac{\partial {{\mb{\Sigma}}(\lambda_1 )}	}{\partial\alpha_1} \right),\\
&\frac{\partial \log\det\left(-\mb{\Phi}_{2}-\mb{I}\right)	}{\partial\alpha_1}=\text{Tr}\left(\left(-\mb{\Phi}_{2}-\mb{I}\right)^{-1}\frac{\partial \left(-\mb{\Phi}_{2}-\mb{I}\right)	}{\partial\alpha_1} \right),\\
&\frac{\partial \log\det \left(	\mb{\Omega} \right)	}{\partial\alpha_1}	=\text{Tr}\left(\mb{\Omega}^{-1}\frac{\partial \mb{\Omega}	}{\partial\alpha_1} \right),\\
&\frac{\partial \log\det\left({\mb{\Pi}}\right)	}{\partial\alpha_1}=\text{Tr}\left({\mb{\Pi}}^{-1}\frac{\partial {\mb{\Pi}}	}{\partial\alpha_1} \right),\\
&\frac{\partial 	\log\det\left({\mb{O}}\right)	}{\partial\alpha_1}	=\text{Tr}\left({\mb{O}}^{-1}\frac{\partial {\mb{O}}	}{\partial\alpha_1} \right), \\
&\frac{\partial \log\det\left(	\mc{G}_3(\lambda_1)^{-1} \right)	}{\partial\alpha_1}
=\text{Tr}\left(\mc{G}_3(\lambda_1)\frac{\partial \mc{G}_3(\lambda_1)^{-1}	}{\partial\alpha_1} \right). \label{G3}
\end{align}

 In addition, 	according to Lemma 1 in \cite{WangTWC2024} about the  parameterized one-sided correlation matrices in \eqref{yta} to \eqref{zeta_qta_k-n}, we can obtain
 \begin{align}
&\frac{\partial \textup{Tr}\left({\mb{\Phi}}_{1} \mc{G}_{1}(\lambda_1)\right)	}{\partial\alpha_1}=\textup{Tr}\left( \mc{G}_{1}(\lambda_1)\frac{\partial  {\mb{\Phi}}_{1}	}{\partial\alpha_1}\right)+ \textup{Tr}\left(  {\mb{\Phi}}_{1} \frac{\partial\mc{G}_{1}(\lambda_1)	}{\partial\alpha_1}\right) \notag \\
&=\textup{Tr}\left( \mc{G}_{1}(\lambda_1)\frac{\partial  \left(\lambda_1\mb{I}- \widetilde{\mb{\Psi}}\right)	}{\partial\alpha_1}\right) +\textup{Tr}\left( \mc{G}_{7}(\lambda_1)\frac{\partial  \widetilde{\mb{\Phi}}_{1,1}	}{\partial\alpha_1}\right) \notag \\
&=-\lambda_1^2-\textup{Tr}\left( \mc{G}_{1}(\lambda_1)\frac{\partial  \widetilde{{\mb{\Psi}}}	}{\partial\alpha_1}\right)+\textup{Tr}\left( \mc{G}_{7}(\lambda_1)\frac{\partial  \widetilde{\mb{\Phi}}_{1,1}	}{\partial\alpha_1}\right).\label{G1fai11}
 \end{align}
Applying  the matrix inversion lemma to $ \mc{G}_{1}(\lambda_1) $ and \eqref{maxtr_derive2}, we can obtain 
\begin{align}
&\textup{Tr}\left( \mc{G}_{1}(\lambda_1)\frac{\partial  \widetilde{{\mb{\Psi}}}	}{\partial\alpha_1}\right) \notag \\
&=\textup{Tr}\left( \left( \widetilde{{\mb{\Psi}}}^{-1} +
\widetilde{{\mb{\Psi}}}^{-1}\overline{\mb{G}}_1\mc{G}_{7}(\lambda_1)\overline{\mb{G}}_1^\dagger\widetilde{{\mb{\Psi}}}^{-1}
\right)
\frac{\partial  \widetilde{{\mb{\Psi}}}	}{\partial\alpha_1}\right) \notag \\
&=\text{Tr}\left(\widetilde{\mb{\Psi}}^{-1}\frac{\partial {\widetilde{\mb{\Psi}}}	}{\partial\alpha_1} \right)-\textup{Tr}\left( \mc{G}_{7}(\lambda_1)\overline{\mb{G}}_1^\dagger\frac{\partial  \widetilde{{\mb{\Psi}}}^{-1}	}{\partial\alpha_1}\overline{\mb{G}}_1\right). \label{G7fai11}
\end{align}
Combining  \eqref{G1fai11} and \eqref{G7fai11} and applying  the matrix inversion lemma to $ \mc{G}_{7}(\lambda_1) $, we can rewrite $ \frac{\partial \textup{Tr}\left({\mb{\Phi}}_{1} \mc{G}_{1}(\lambda_1)\right)	}{\partial\alpha_1} $ as 
\begin{align}
\frac{\partial \textup{Tr}\left({\mb{\Phi}}_{1} \mc{G}_{1}(\lambda_1)\right)	}{\partial\alpha_1}=&-\lambda_1^2-\text{Tr}\left(\widetilde{\mb{\Psi}}^{-1}\frac{\partial {\widetilde{\mb{\Psi}}}	}{\partial\alpha_1} \right)-\text{Tr}\left({\mb{\Sigma}}^{-1}\frac{\partial {{\mb{\Sigma}}}	}{\partial\alpha_1} \right)\notag \\
&
+\textup{Tr}\left( \mc{G}_{2}\frac{\partial {{\mb{\Sigma}}}^{-1}	}{\partial\alpha_1}\right).
\end{align}

Similarly, we can  express $ \frac{\partial \textup{Tr}\left({\mb{\Phi}}_{2} \mc{G}_{5}(\lambda_1)\right)	}{\partial\alpha_1} $ as 
\begin{align}
&\frac{\partial \textup{Tr}\left({\mb{\Phi}}_{2} \mc{G}_{5}(\lambda_1)\right)	}{\partial\alpha_1}=\textup{Tr}\left( \mc{G}_{5}(\lambda_1)\frac{\partial  {\mb{\Phi}}_{2}	}{\partial\alpha_1}\right)+ \textup{Tr}\left(  {\mb{\Phi}}_{2} \frac{\partial\mc{G}_{5}(\lambda_1)	}{\partial\alpha_1}\right) \notag \\
&=\textup{Tr}\left( \mc{G}_{5}(\lambda_1)\frac{\partial  {\mb{\Phi}}_{2}	}{\partial\alpha_1}\right)+ \textup{Tr}\left( \mc{G}_{4}(\lambda_1)\frac{\partial  \widetilde{{\mb{\Phi}}}_{2}	}{\partial\alpha_1}\right). \label{fai2g5}
\end{align}
With the matrix inversion lemma applied to $ \mc{G}_{5}(\lambda_1) $, we can obtain 
\begin{align}
&\textup{Tr}\left( \mc{G}_{5}(\lambda_1)\frac{\partial  {\mb{\Phi}}_{2}	}{\partial\alpha_1}\right)=-\textup{Tr}\left( \left( \left(-\mb{\Phi}_{2}-\mb{I}\right)^{-1} \right. \right. \notag \\
&\left.\left. + \left(-\mb{\Phi}_{2}-\mb{I}\right)^{-1}\overline{\mb{G}}_2  \mc{G}_{4}\overline{\mb{G}}_2^\dagger\left(-\mb{\Phi}_{2}-\mb{I}\right)^{-1} \right)\frac{\partial \left(- {\mb{\Phi}}_{2}-\mb{I}\right)	}{\partial\alpha_1}\right) \notag \\
&=-\textup{Tr}\left( \left(-\mb{\Phi}_{2}-\mb{I}\right)^{-1}  \frac{\partial \left(- {\mb{\Phi}}_{2}-\mb{I}\right)	}{\partial\alpha_1}\right) \notag \\
&+\textup{Tr}\left(  \mc{G}_{4}\overline{\mb{G}}_2^\dagger \frac{\partial \left(-\mb{\Phi}_{2}-\mb{I}\right)^{-1}	}{\partial\alpha_1}\overline{\mb{G}}_2 \right). \label{G5fai2}
\end{align}
Substituting  \eqref{G5fai2} into \eqref{fai2g5} and with the matrix inversion lemma applied to $ \mc{G}_{4}(\lambda_1) $, we can rewrite $\frac{\partial \textup{Tr}\left({\mb{\Phi}}_{2} \mc{G}_{5}(\lambda_1)\right)	}{\partial\alpha_1} $ as 
\begin{align}
&\frac{\partial \textup{Tr}\left({\mb{\Phi}}_{2} \mc{G}_{5}(\lambda_1)\right)	}{\partial\alpha_1}
=
-\textup{Tr}\left( \left(-\mb{\Phi}_{2,2}-\mb{I}\right)^{-1}  \frac{\partial \left(- {\mb{\Phi}}_{2}-\mb{I}\right)	}{\partial\alpha_1}\right) \notag \\
&- \textup{Tr}\left(  \mb{\Omega} \frac{\partial\mb{\Omega}	}{\partial\alpha_1} \right)+\textup{Tr}\left(  \mc{G}_6 \frac{\partial\mb{\Omega}	}{\partial\alpha_1} \right) .
\end{align}

For the terms $ \frac{\partial  \textup{Tr}\left(\widetilde{{\mb{\Psi}}}_{1,1} \mc{G}_{2}(\lambda_1)\right)	}{\partial\alpha_1} $ and $ \frac{\partial  \textup{Tr}\left(\widetilde{{\mb{\Psi}}}_{2,2} \mc{G}_{6}(\lambda_1)\right)	}{\partial\alpha_1} $, we have
\begin{align}
	&\frac{\partial  \textup{Tr}\left(\widetilde{{\mb{\Psi}}}_{1,1} \mc{G}_{2}(\lambda_1)\right)	}{\partial\alpha_1}
\notag \\
&	=\textup{Tr}\left( \mc{G}_{2}(\lambda_1)\frac{\partial  \widetilde{{\mb{\Psi}}}_{1,1}	}{\partial\alpha_1}\right)+ \textup{Tr}\left(  \widetilde{{\mb{\Psi}}}_{1,1} \frac{\partial\mc{G}_{2}(\lambda_1)	}{\partial\alpha_1}\right) \notag \\
	&=\textup{Tr}\left( \mc{G}_{2}(\lambda_1)\frac{\partial  \widetilde{{\mb{\Psi}}}_{1,1}	}{\partial\alpha_1}\right)+ \textup{Tr}\left( \mc{G}_{3}(\lambda_1)\frac{\partial  {{\mb{\Psi}}}_{1,1}	}{\partial\alpha_1}\right), \\
		&\frac{\partial  \textup{Tr}\left(\widetilde{{\mb{\Psi}}}_{2,2} \mc{G}_{6}(\lambda_1)\right)	}{\partial\alpha_1}
		 \notag \\
		&=\textup{Tr}\left( \mc{G}_{6}(\lambda_1)\frac{\partial  \widetilde{{\mb{\Psi}}}_{2,2}	}{\partial\alpha_1}\right)+ \textup{Tr}\left(  \widetilde{{\mb{\Psi}}}_{2,2} \frac{\partial\mc{G}_{6}(\lambda_1)	}{\partial\alpha_1}\right) \notag \\
	&=\textup{Tr}\left( \mc{G}_{6}(\lambda_1)\frac{\partial  \widetilde{{\mb{\Psi}}}_{2,2}	}{\partial\alpha_1}\right)+ \textup{Tr}\left( \mc{G}_{3}(\lambda_1)\frac{\partial  {{\mb{\Psi}}}_{2,2}	}{\partial\alpha_1}\right).
\end{align}
 With the matrix inversion lemma applied to $ \mc{G}_{6}(\lambda_1) $ and $ \mc{G}_{2}(\lambda_1) $, we can obtain 
 \begin{align}
   &\mc{G}_{6}(\lambda_1)=\mb{\Pi}^{-1}+\mb{\Pi}^{-1}\overline{\mb{F}}_2 \mc{G}_{3}(\lambda_1)\overline{\mb{F}}_2^\dagger\mb{\Pi}^{-1},  \label{G6inv}\\
 &  \mc{G}_{2}(\lambda_1)=\mb{O}^{-1}+\mb{O}^{-1}\overline{\mb{F}}_1 \mc{G}_{3}(\lambda_1)\overline{\mb{F}}_1^\dagger\mb{O}^{-1}. \label{G2inv}
 \end{align}
 Finally, combining \eqref{Psi_t}-\eqref{G3} and the other simplified terms as well as \eqref{G6inv} and \eqref{G2inv} and with some simple math manipulation, we can simplify $ \frac{\partial\overline{I}_1(\alpha_1)	}{\partial\alpha_1} $ as
	\begin{align}
	\frac{\partial\overline{I}_1(\alpha_1)	}{\partial\alpha_1}=	&		-\lambda_1^2\text{Tr}\left(	\mc{G}_{\widetilde{\mb{C}}}\left(\lambda_1\right)\right)+ \frac{R_1}{1+\alpha_1} \notag \\
		&=\frac{R_1}{(1+\alpha_1)^2}\mc{G}_{\mb{B}}(-(1+\alpha_1)^{-1}) + \frac{R_1}{1+\alpha_1}.
	\label{V_derive2}
	\end{align}

By substituting \eqref{B} into \eqref{V_derive2}, we can derive that
\begin{align}
&\frac{\partial\overline{I}_1}{\partial\alpha_1}=\frac{R_1}{1+\alpha_1} \notag \\
	&+\frac{R_1}{(1+\alpha_1)^2}\left(  \frac{S}{R_1}\mc{G}_\mu(-(1+\alpha_1)^{-1}) + \frac{(S-R_1)(1+\alpha_1)}{R_1}\right)\notag \\
&	=\frac{S}{1+\alpha_1} + \frac{S}{(1+\alpha_1)^{2}}\mc{G}_\mu(-(1+\alpha_1)^{-1}) \notag \\
&=\frac{\partial{I}_1}{\partial\alpha_1}.
\end{align}

Furthermore, as $\sigma_0^2$ approaches infinity, $\alpha_1=0$ and in this case we have 
	\begin{align}
	\overline{I}_1 =\phi(-1)-\phi(-1)+R_1\log\left({1}\right)=0.
	\end{align}
Therefore, we verify the correctness of $\overline{I}_1$. Since $\overline{I}_2$ has a similar structure to $\overline{I}_1$, their proof process is similar and we omit the detailed derivation. As a result, we complete the proof of Proposition 3.

\section{Gradient of $ \bar{I}_1 $ + $ \bar{I}_2 $}\label{appx_gradient}
 
\begin{figure*}[t] 
	{
		\begin{align} 
			\frac{\partial 
				\phi(x)}{d\mb{\Theta}_{k,i,l}^*}=&\mathrm{Tr}\left(-\left(\hat{\mb{\Theta}}_i^\dagger\mb{\Gamma}_1\hat{\mb{\Theta}}_i\widetilde{\mb{\Psi}}_{i,i}-\mb{I}\right)^{-1}\mb{E}_{k,i,l}\mb{\Gamma}_1\hat{\mb{\Theta}}_i\widetilde{\mb{\Psi}}_{i,i} \right) +\mathrm{Tr}\left(\mc{G}_3(x)  \overline{\mb{F}}^\dagger\mb{\Gamma}_2\overline{\mb{F}} \right),  \label{grad}
			\\
			\mb{\Gamma}_1=&-	\widetilde{\mb{\Phi}}_i-\overline{\mb{R}}_i^\dagger\left(-{\Phi}_{i}-\mb{I}\right)^{-1}\overline{\mb{R}}_i,
	\quad
			\mb{\Gamma}_2=-\left(-\widetilde{\mb{\Psi}}_{i,i} - \left(\hat{\bf{\Theta}}_{i}^\dagger\mb{\Gamma}_1\hat{\bf{\Theta}}_i \right)^{-1}\right)^{-1}\mb{\Gamma}_3\left(-\widetilde{\mb{\Psi}}_{i,i} - \left(\hat{\bf{\Theta}}_{i}^\dagger\mb{\Gamma}_1\hat{\bf{\Theta}}_i \right)^{-1}\right)^{-1}, \label{gamma1} 
			\\
			\mb{\Gamma}_3 =&\left(\hat{\mb{\Theta}}_i^\dagger\mb{\Gamma}_1\hat{\mb{\Theta}}_i-\mb{I}\right)^{-1}\mb{E}_{k,i,l}\mb{\Gamma}_1\hat{\mb{\Theta}}_i\left(\hat{\mb{\Theta}}_i^\dagger\mb{\Gamma}_1\hat{\mb{\Theta}}_i-\mb{I}\right)^{-1}. \label{gamma2} 
		\end{align}
	}
	\hrulefill
\end{figure*}
According to Proposition~\ref{prop_shano}, we can obtain 
 \begin{align}
&\boldsymbol{\delta}_i=2\frac{\partial (\bar{I}_1+\bar{I}_2 )}{\boldsymbol{\theta}_i^*} \notag \\
&=2\frac{\partial 
	\left( \phi(\frac{-1}{1+\alpha_1})-\phi(-1)+\phi(-1-\alpha_2)-\phi(-1-\kappa_1\alpha_2) \right)}{\boldsymbol{\theta}_i^*}
	, \label{vbz_deri}
\end{align}
where $ \frac{\partial 
 \phi(x)}{\boldsymbol{\theta}_i^*}=\left[\frac{\partial 
 \phi(x)}{d\mb{\Theta}_{1,i,1}^*},
\frac{\partial 
	\phi(x)}{d\mb{\Theta}_{1,i,2}^*},...,\frac{\partial 
	\phi(x)}{d\mb{\Theta}_{K,i,L_k}^*}\right]^\text{T} $. The derivative $ \frac{\partial 
	\phi(x)}{d\mb{\Theta}_{k,i,l}^*} $  is given by \eqref{grad} at the top of this page,
where we omit the notation $ (x) $ for convenience. The matrix $\hat{\mb{\Theta}}_i$ is defined as $\hat{\mb{\Theta}}_i=\text{diag}\{\mb{I},\mb{\Theta}_i\}$ and $ \mb{E}_{k,i,l} $ is defined as $ \frac{\partial\hat{\mb{\Theta}}_i}{\partial [\mb{\Theta}_{k,i}]_{l,l}} $.
\end{appendices}

\end{document}